\newcounter{ctheorem}
\newtheorem{theorem}[ctheorem]{Theorem}
\newcounter{clemma}
\newtheorem{lemma}[clemma]{Lemma}
\newcounter{ccorollary}
\newtheorem{corollary}[ccorollary]{Corollary}
\newcommand{\otsikko}{Shrinking the eigenvalues of M-estimators of covariance matrix}
\newcommand{\keywords}{M-estimators, sample covariance matrix, shrinkage, regularization, elliptically symmetric distributions}
\pgfplotsset{compat=1.13}
  \newlength\fwidth
\newcommand{\E}{\mathbb{E}}                  
\newcommand{\bo}[1]{\mathbf{#1}}              
\newcommand{\bom}[1]{\boldsymbol{#1}}    
\newcommand{\be}{\beta}
\newcommand{\ka}{\kappa}
\newcommand{\pdim}{p}
\newcommand{\ndim}{n}
\newcommand{\wins}{\textsf{wins}}
\renewcommand{\Pi}{{\mathcal P}}
\newcommand{\hop}{\mathsf{H}}        
\newcommand{\beq}{\begin{equation}}
\newcommand{\eeq}{\end{equation}}
\newcommand{\bmat}{\begin{pmatrix}}
\newcommand{\emat}{\end{pmatrix}}
\newcommand{\I}{\bo I}
\newcommand{\X}{{\bf X}}
\newcommand{\A}{{\bo A}}
\newcommand{\B}{{\bo B}}
\renewcommand{\S}{{\bo S}} 
\newcommand{\C}{\mathbb{C}} 
\newcommand{\x}{\bo x}
\renewcommand{\u}{\bo v}
\newcommand{\Fr}{\mathrm{F}}
\newcommand{\M}{\bom \Sigma}
\newcommand{\Mn}{\M_0}  	    
\newcommand{\Mor}{\bo C}	    
\newcommand{\MSE}{\mathrm{MSE}}
\newcommand{\sca}{\sigma}
\DeclareMathOperator{\tr}{tr}
\DeclareMathOperator{\Tr}{tr}
\DeclareMathOperator{\cov}{cov}
\begin{document}

\title{\otsikko}
\author{Esa~Ollila,~\IEEEmembership{Member,~IEEE},~Daniel P. Palomar,~\IEEEmembership{Fellow,~IEEE}~and~Fr\'ed\'eric Pascal,~\IEEEmembership{Senior Member,~IEEE}
\thanks{E. Ollila is with the Department of Signal
	Processing and Acoustics, Aalto University, P.O. Box 15400, FI-00076
Aalto, Finland. D. Palomar is with the Hong Kong University of Science and Technology, Hong Kong. F. Pascal is with Universit\'e Paris-Saclay, CNRS, CentraleSup\'elec, Laboratoire des signaux et syst\`emes, 91190, Gif-sur-Yvette, France.}
\thanks{Manuscript received November xx, 202X; revised xx, 202X.}}

\markboth{IEEE TRANSACTIONS ON SIGNAL PROCESSING,~Vol.~xxx, No.~x, 2020}%
{Ollila \MakeLowercase{\textit{et al.}}: \otsikko}

%



\maketitle

\begin{abstract}
A highly popular regularized (shrinkage) covariance matrix estimator is the shrinkage sample covariance matrix (SCM)  which shares the same set of eigenvectors as the SCM but shrinks its eigenvalues toward the grand mean of the eigenvalues of the SCM. In this paper, a more general approach is considered in which  the SCM is replaced  by an M-estimator of scatter matrix and a fully automatic data adaptive method to compute the optimal shrinkage parameter with minimum mean squared error is proposed. Our approach permits the use of any weight function such as Gaussian, Huber's, Tyler's, or $t$ weight functions, all of which are commonly used in M-estimation framework. Our simulation examples illustrate that shrinkage M-estimators based on the proposed optimal tuning combined with robust weight function do not loose in performance to shrinkage SCM estimator when the data is Gaussian, but provide significantly improved performance when the data is sampled from an unspecified heavy-tailed elliptically symmetric distribution. Also, real-world and synthetic stock market data validate the performance of the proposed method in practical applications. 
\end{abstract}

\begin{IEEEkeywords}
	\keywords
\end{IEEEkeywords}

%
\IEEEpeerreviewmaketitle


\section{Introduction} \label{sec:intro}

Consider a sample of $p$-dimensional vectors $\{ \x_i \}_{i=1}^n$ sampled from a distribution of a random vector $\x$ with mean vector equal to zero (i.e., $\E[\x]= \bo 0$).   One of the first task in the analysis of high-dimensional data is to estimate the covariance matrix.   The most commonly used estimator is the sample covariance matrix (SCM), $\S=\frac 1 n \sum_{i=1}^\ndim \x_i \x_i^\top$, 
but its main drawbacks are  its loss of efficiency when sampling from distributions which have heavier 
tails than the multivariate normal (MVN) distribution and its sensitivity to outliers.  Although being unbiased estimator of 
the covariance matrix $\cov(\x) = \E[\x \x^\top]$ for any sample length $n \geq 1$, it is well-known that the eigenvalues are poorly estimated when $n$ is not orders of magnitude larger than $p$. In such cases, one commonly uses a  \emph{regularized SCM (RSCM)} with a linear shrinkage towards a scaled identity matrix, defined as 
\beq \label{eq:LW}
\S_\be = \be \S + (1-\be) \frac{\Tr(\S)}{p} \I,
\eeq 
where $\be \in [0,1]$ is the regularization parameter.  The RSCM $\S_\be$ shares the same set of eigenvectors as the SCM $\S$, but its eigenvalues are shrinked towards the grand mean of the eigenvalues of the SCM $\S$. That is, if $d_1,\ldots, d_p$ denote the eigenvalues of $\bo S$, then $ \be d_j + (1-\be)\bar d$ are the eigenvalues of  $\bo S_\be$, where $\bar d=p^{-1}\sum_{j=1}^p d_j$. Optimal computation of  $\be$ such that $\S_\be$ has minimum mean squared error (MMSE) has been developed for example in  \cite{ledoit2004well,ollila2019optimal} or in \cite{li2017estimation} for  certain structured target matrices.
 A Bayesian approach has been considered in \cite{coluccia2015regularized}.

However, the estimator in \eqref{eq:LW} remains sensitive to outliers and non-Gaussianity.  M-estimators of scatter \cite{maronna1976robust} are popular robust alternatives to SCM. We consider the situation where  $n>p$ and hence a conventional M-estimator of scatter  $\hat \M$ exists under mild conditions on the data (see \cite{kent1991redescending})  and can thus be used  in place of the SCM $\bo S$  in  \eqref{eq:LW}.  We then propose a fully automatic data adaptive method to  compute the optimal shrinkage parameter $\be$.    First, we derive an approximation for the optimal parameter $\be_o$ attaining the minimum MSE and then propose a data adaptive method for its computation.  The benefit of our approach is that it can be easily applied to any M-estimator using any weight function $u(t)$.   Our simulation examples illustrate that a shrinkage M-estimator using the proposed tuning and a robust weight function does not loose in performance to  optimal shrinkage SCM estimator when the data is Gaussian, but is able to provide significantly improved performance in the case of heavy-tailed data and in the presence of outliers. 

Earlier works, e.g., in \cite{abramovich2007diagonally,chen2010shrinkage,du2010fully,ollila2014regularized,pascal2014generalized, sun2014regularized,couillet2014large,zhang2016automatic,yi2020shrinking},  proposed regularized M-estimators of scatter matrix either  by  adding a penalty function to M-estimation objective function or a diagonal loading term to the respective first-order solution (M-estimating equation).  We consider  a simpler approach that uses a conventional M-estimator and shrinks its  eigenvalues to the grand mean of the eigenvalues.  Our approach permits  computation of the optimal shrinkage parameter for any M-estimation weight function. Preliminary study of the proposed estimators has appeared in a conference proceeding \cite{ollila2020m}. 

Finally, we note some related but different approaches to what is pursed in this paper.  For example, covariance matrix estimation  in the low sample size large dimensionality setting commonly arises in radar signal processing, where often  some constrained,  mismatched or structured estimation framework of the covariance matrix is exploited. See e.g.,   \cite{aubry2013radar,sun2016low,aubry2018geometric,demaio2019loading,tang2020invariance,besson2020maximum}. On the other hand, there are also other approaches for parameter tuning of regularized covariance matrix estimators such as  cross-validation or expected likelihood approach \cite{abramovich2013regularized,besson2013regularized,abramovich2015expected}.

The paper is structured as follows. \autoref{sec:shrinkM} introduces the proposed  shrinkage M-estimator framework while 
\autoref{sec:beta_opt}  discusses automatic computation of the optimal shrinkage parameter under the assumption of sampling from unspecified elliptical distribution.  Extension to complex  case is discussed in \autoref{sec:complex_ext}.   \autoref{sec:special_cases}  addresses the most commonly used M-estimators, namely, the Gaussian weight function, Huber's weight function, Tyler's weight and $t$-distribution weight function. We provide simulation studies in \autoref{sec:simul}  and experimental results in \autoref{sec:applic}.  In  \autoref{sec:applic} we validate the promising performance of the proposed approach 
both in the case of  real-world and synthetic stock market data. Finally, \autoref{sec:concl} concludes, while proofs of theorems and lemmas are kept in the Appendix.


\section{Shrinkage M-estimators of scatter} \label{sec:shrinkM}

In this paper, we assume that $n>p$ (except for Gaussian loss) and consider an M-estimator of scatter matrix \cite{maronna1976robust} that 
solves an estimating equation
\beq \label{eq:Mest}
\hat \M = \frac{1}{n} \sum_{i=1}^n u(\x_i^\top \hat \M^{-1} \x_i) \x_i \x_i^\top , 
\eeq 
where $u: [0, \infty) \to [0,\infty)$ is a non-increasing \emph{weight function}.  An M-estimator is a sort of adaptively weighted SCM with weights determined by function $u(\cdot)$.   To guarantee existence of the solution,  it is required that the data verifies the condition stated in   \cite{kent1991redescending}. An M-estimator of scatter which shrinks the eigenvalues towards the grand mean of the eigenvalues is then defined as:  
\beq \label{eq:shrinkMest}
\hat \M_\be = \be \hat \M + (1-\be) \frac{\Tr(\hat \M)}{p} \I. 
\eeq 
Thus $\hat \M_\be$ is indexed by the  shrinkage parameter $\beta \in [0,1]$.  If $\beta=1$, then $\hat \M_\be$ coincides with the conventional M-estimator in \eqref{eq:Mest} while if $\beta=0$, then $\hat \M_\be$ equals an identity matrix scaled by mean of the eigenvalues of $\hat \M$. Next we discuss the commonly used weight functions $u$. 

The RSCM $\bo S_\be$ in \eqref{eq:LW} is obtained when one uses the \emph{Gaussian weight function} $u_{\text{G}}(t)=1$ $\forall t$. Terminology `Gaussian' stems from the fact that  $\hat \M = \bo S$ is the  maximum likelihood estimate (MLE)  of the covariance matrix of MVN distribution. We return to this in \autoref{sec:beta_opt}. \emph{Huber's weight function} is defined as 
\beq \label{eq:huber_u}
u_{\mbox{\tiny H}} (t;c) =   \begin{cases}  1/b,  
&  \ \mbox{for} \ t \leqslant  c^2 \\ c^2/(t b),  & \ \mbox{for} \ t > c^2 \end{cases}  
\eeq 
where   $c > 0$ is a user defined tuning constant that determines the robustness and efficiency of the estimator and  
$b$ is  a scaling factor; see \autoref{subsect:huber}  for more details. 
Another popular choice is  \emph{MVT-weight function} \cite{kent1991redescending}:
\beq \label{eq:t_u}
u_{\mbox{\tiny T}}(t ; \nu) = \frac{\pdim  + \nu }{\nu +t}
\eeq 
in which case the corresponding M-estimator $\hat \M$  is also the MLE of the scatter matrix parameter of the \emph{multivariate $t$ (MVT) distribution}  with $\nu>0$ degrees of freedom (d.o.f.). We return to this estimator in  \autoref{subsect:tMLE}. 
Finally, another classic choice, with nice robustness properties,  is Tyler's \cite{tyler1987distribution} M-estimator, in which case the weight function is  
\beq \label{eq:tyl_u} 
u_{\mbox{\tiny Tyl}}(t)= \frac{p}{t}. 
\eeq 
Both Huber's and MVT-weight function yield Tyler's weight function as special cases; namely, 
for $\nu=0$, one notices that $u_{\mbox{\tiny T}}(t ; \nu=0) =u_{\mbox{\tiny Tyl}}(t)$ and in the limit case 
as $c \to 0$   Huber's weight function tends to Tyler's weight function.  

We would like to stress that $n>p$ is a necessary assumption for all but Gaussian weight functions for a solution $\hat \M$ to \eqref{eq:Mest} to exist. We do not include the limit case $n=p$ since any affine equivariant M-estimator $\hat \M$ when $n=p$ and data is in general position is just a scalar multiple of the SCM $\bo S$, that is, $\hat \M = \gamma \bo S$ for some $\gamma>0$ \cite{tyler2010note}.  For example, M-estimator based on Huber's or $t$-weights are affine equivariant. Moreover, note that Theorem 2 in \cite{couillet2015therandom} ensures similar results in the large sample regime.  Namely, the following convergence is proved 
 $$\|\hat \M - \hat{\bo S}_{np}\| \xrightarrow[n, p \to \infty]{a.s} 0$$
 with $n/p\to c \in (0,1)$ and $\hat{\bo S}_{np}$ is an appropriate weighted SCM, and the norm denotes the spectral norm.   

An M-estimator is a consistent estimator of the M-functional of scatter matrix,  defined as 
 \beq \label{eq:Mfun}
\Mn = \E\big[ u(\x^\top \Mn^{-1} \x)\x\x^\top \big].
\eeq  
If the population M-functional $\Mn$ is known,  then by defining a  \emph{1-step estimator}
\beq \label{eq:Mor} 
\Mor =    \frac{1}{n} \sum_{i=1}^n u(\x_i^\top \Mn^{-1} \x_i) \x_i \x_i^\top   , 
\eeq
we can compute 
\beq\label{eq:Mor2}
 \bo C_\be =   \be  \Mor + (1-\be)[\tr(\bo C)/p] \I ,
\eeq 
which serves as  a proxy for $\hat \M_\be$.  Naturally, such an estimator is fictional, since the initial value $\Mn$ is unknown. 
 The 1-step estimator $\Mor$ is, by  its construction, an unbiased estimator of  $\Mn$,  i.e., $\E[\Mor] = \Mn$. 

Ideally we would like to find the value of $\beta  \in [0,1]$ for which  the corresponding estimator $\hat \M_\be$ attains the minimum MSE, that is, 
\beq \label{eq:optimal_beta_o}
\beta_o = \arg \min_\be  \Big\{ \mathrm{MSE}(\hat \M_\be) =  \mathbb{E} \Big[ \big\| \hat \M_\be - \M_0 \|^2_{\mathrm{F}} \Big] \Big\},
\eeq 
where $\| \cdot \|_{\Fr}$ denotes the Frobenius matrix norm (i.e., $\| \A \|_{\Fr}^2 = \tr(\bo A^\top \bo A)$ and $\| \A \|_{\Fr}^2 = \tr(\bo A^\hop \bo A)$ for real-valued and complex-valued matrices, respectively, where $(\cdot)^\hop$ denotes the Hermitian transpose).  
Since solving \eqref{eq:optimal_beta_o} is not feasible due to the implicit form of M-estimators, we instead solve the following  
much simpler problem: 
\beq \label{eq:RFegSCM_oracle} 
\be_{o}^{\text{app}} =   \underset{\be}{\arg \min}  \ \Big \{ \mathrm{MSE}(\bo C_\be) =  \E \Big[ \big\| \bo C_\be - \Mn \big\|^2_{\rm F}  \Big] \Big\} .   
\eeq 
Such approach was also used  in \cite{chen2011robust} to derive an optimal parameter for the shrinkage Tyler's M-estimator of scatter proposed by the authors, 

Before stating the expression for $ \be_{o}^{\text{app}}$ we  introduce a \emph{sphericity}  measure of scatter, defined as 
\beq \label{eq:gamma} 
 \gamma =  \frac{p \tr(\Mn^2)}{\tr(\Mn)^2}   .
\eeq 
Sphericity $\gamma$  \cite{ledoit2002some,srivastava2005some} measures how close $\Mn$ is to a scaled identity matrix: $\gamma \in [1,p]$ 
where $\gamma=1$ if and only if $\Mn \propto \I$ and $\gamma = p$ if $\Mn$ has rank equal to 1.

\begin{theorem}  \label{th:beta0}    
Suppose $\x_1, \ldots, \x_n$  is an i.i.d.\ random sample from any $p$-variate distribution,    and 
$u$ is a weight function for which the expectation $\E[ \tr(\bo C^2)]$ exists. 
The oracle  parameter $\be_o^{\mathrm{app}}$ in \eqref{eq:RFegSCM_oracle} 
 is
\begin{align}
\be_o^{\mathrm{app}}  &= \frac{\| \Mn - \eta_o \I \|_{\Fr}^2}{\E \Big [ \big\| \bo C -   (\tr(\bo C)/p) \I \big\|_{\Fr}^2 \Big]} \label{eq:beta0id} \\ 
	    &= \frac{p (\gamma-1) \eta^2_o}{\E[\tr(\bo C^2)] - p^{-1} \E[ \tr(\bo C)^2]} \label{eq:beta0id2} 
\end{align}
where $\eta_o  = \tr(\Mn)/p$ and $\gamma$ is defined in \eqref{eq:gamma}. Note that $\be_o^\mathrm{app} \in \left[0,1\right)$ and the value of the MSE
at the optimum  is
\begin{equation} \label{eq:MSEopt}
 \MSE(\bo C_{\be_o^{\mathrm{app}}})  =  \frac{ \E[\tr(\bo C)^2 ] - \tr(\Mn)^2}{p}+ ( 1-  \beta_o^{\mathrm{app}})  \big\| \Mn - \eta_o  \bo I  \big \|^2_{\Fr}.
\end{equation}
\end{theorem}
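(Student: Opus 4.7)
My plan is to reduce the problem to a simple scalar quadratic optimization in $\be$. The key idea is to use the additive decomposition
\[
\bo C_\be - \Mn \;=\; \be\,\bigl(\bo C - \hat\eta\,\I\bigr) + \bigl(\hat\eta\,\I - \Mn\bigr),
\]
where $\hat\eta = \tr(\bo C)/p$. Expanding $\|\bo C_\be - \Mn\|_{\Fr}^2$ under this decomposition and taking expectations gives the quadratic
\[
\MSE(\bo C_\be) = \be^2\,D - 2\be\,N + E,
\]
with $D = \E\bigl[\|\bo C - \hat\eta\,\I\|_{\Fr}^2\bigr]$, $N = -\E\bigl[\langle \bo C - \hat\eta\,\I,\hat\eta\,\I - \Mn\rangle\bigr]$, and $E = \E\bigl[\|\hat\eta\,\I - \Mn\|_{\Fr}^2\bigr]$. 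The minimizer is $\be_o^{\mathrm{app}} = N/D$, and substituting back yields $\MSE(\bo C_{\be_o^{\mathrm{app}}}) = E - \be_o^{\mathrm{app}} N$.

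The first concrete step is to evaluate the cross term $N$. Expanding the inner product and using that $\hat\eta\tr(\bo C) = p\hat\eta^2$, it collapses to $N = \E[\tr(\bo C\Mn)] - p\,\eta_o\,\E[\hat\eta]$. Applying the unbiasedness $\E[\bo C] = \Mn$ (which is immediate from \eqref{eq:Mor}--\eqref{eq:Mfun}) gives $N = \tr(\Mn^2) - p\,\eta_o^2 = \|\Mn - \eta_o\I\|_{\Fr}^2$, which is precisely the numerator in \eqref{eq:beta0id}. For the denominator, a direct expansion yields $D = \E[\tr(\bo C^2)] - p^{-1}\E[\tr(\bo C)^2]$, matching \eqref{eq:beta0id2} after substituting $\tr(\Mn^2) = p\gamma \eta_o^2$ into the numerator.

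The formula \eqref{eq:MSEopt} then falls out of substituting $\MSE(\bo C_{\be_o^{\mathrm{app}}}) = (1 - \be_o^{\mathrm{app}})E - \be_o^{\mathrm{app}}(N - E)$ and simplifying $E$: one just computes $E = \E[\tr(\bo C)^2]/p - 2\tr(\Mn)^2/p + \tr(\Mn^2)$ and rearranges so that the term $\|\Mn - \eta_o\I\|_{\Fr}^2$ appears with coefficient $(1-\be_o^{\mathrm{app}})$, while the leftover piece consolidates to $(\E[\tr(\bo C)^2] - \tr(\Mn)^2)/p$.

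The only non-routine step is verifying the range $\be_o^{\mathrm{app}} \in [0,1)$. Non-negativity is obvious since $N \geq 0$ and $D \geq 0$. For the strict upper bound I would show $D \geq N$, which, after writing $\bo C - \hat\eta\I = (\bo C - \Mn) + (\Mn - \eta_o\I) - (\hat\eta - \eta_o)\I$ and expanding, is equivalent to $\E[\|\bo C - \Mn\|_{\Fr}^2] \geq p\,\mathrm{Var}(\hat\eta)$. This last inequality I would obtain from a Cauchy--Schwarz bound $\mathrm{Var}(\tr(\bo C)) \leq p\sum_i \mathrm{Var}((\bo C)_{ii}) \leq p\,\E[\|\bo C - \Mn\|_{\Fr}^2]$, so strict inequality holds as long as $\bo C$ is not deterministic, which is the main (mild) obstacle and the only place the assumption that $\E[\tr(\bo C^2)]$ exists really gets used.
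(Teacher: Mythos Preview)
Your proof is correct and proceeds via a slightly different decomposition than the paper's. The paper writes
\[
\bo C_\be - \Mn = \be(\bo C - \Mn) + (1-\be)\bigl(\hat\eta\,\I - \Mn\bigr),
\]
obtains a quadratic $\be^2 a_1 + (1-\be)^2 a_2 + 2\be(1-\be)a_3$, and then simplifies $\be_o^{\mathrm{app}} = (a_2-a_3)/(a_1+a_2-2a_3)$ to the stated form. Your decomposition $\be(\bo C - \hat\eta\,\I) + (\hat\eta\,\I - \Mn)$ is the algebraic rewriting $\be A + (1-\be)B = \be(A-B) + B$, and it pays off: the cross term is linear in $\be$, so the minimizer $N/D$ drops out immediately, and the unbiasedness $\E[\bo C]=\Mn$ makes the numerator collapse to $\|\Mn-\eta_o\I\|_{\Fr}^2$ in one line rather than via the paper's intermediate identity $a_2-a_3 = p(\gamma-1)\eta_o^2$. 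Both routes are short; yours is marginally more direct.

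You also supply an argument for $\be_o^{\mathrm{app}}\in[0,1)$, which the paper's appendix does not actually address. Your reduction $D-N = \E[\|\bo C-\Mn\|_{\Fr}^2] - p\,\mathrm{Var}(\hat\eta)$ is correct, and your Cauchy--Schwarz chain works. A slightly quicker alternative you might note: since $\E[\bo C-\hat\eta\,\I]=\Mn-\eta_o\I$, the bias--variance identity gives $D = N + \E\bigl[\|(\bo C-\Mn)-(\hat\eta-\eta_o)\I\|_{\Fr}^2\bigr] \ge N$, with strict inequality unless $\bo C-\Mn$ is a.s.\ a scalar multiple of $\I$ (a degenerate case). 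This sharpens your ``$\bo C$ not deterministic'' remark, which as stated is not quite the right condition for strictness.
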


\begin{proof}
The proof is postponed to Appendix~\ref{app:th:beta0}.
\end{proof}

In the next section, we derive a more explicit form of  $\be_o^{\mathrm{app}}$ by assuming that the data is generated from an unspecified elliptically symmetric distribution.


\section{Shrinkage parameter for elliptical samples}  \label{sec:beta_opt}

Maronna \cite{maronna1976robust} developed M-estimators of scatter matrices originally  within the framework of elliptically symmetric distributions \cite{fang1990symmetric,ollila2012complex}.   The probability density function (p.d.f.) of centered (zero mean) elliptically distributed
random vector, denoted by $\x \sim \mathcal E_\pdim(\bo 0,\M,g)$, is
\beq \label{eq:pdf_ES} 
	f(\x)
	= C_{\pdim,g} |\M|^{-1/2} g\big(\x^\top \M^{-1} \x\big),
\eeq 
where  $\M$ is a positive definite symmetric  matrix parameter, called the scatter matrix, $g: \left[0,\infty\right) \to \left[0,\infty\right)$ is the
\emph{density generator}, which is a fixed function that is independent of
$\x$ and $\M$, and $C_{\pdim,g}$ is a normalizing constant ensuring
that $f(\x)$ integrates to 1. 
The density generator $g$ determines
the elliptical distribution. For example, the MVN distribution $\mathcal N_p(\bo 0,\M)$ is obtained when
$g(t)=\exp(-t/2)$ and the $t$-distribution with $\nu$ d.o.f., denoted $\x \sim t_{\nu}(\bo 0, \M)$,  
 is obtained when $g(t)= (1+  t/\nu)^{-(p+\nu)/2}$. Then the weight function  for the MLE of scatter matrix  corresponds to  the case that the weight function is of the form 
$
u(t) \propto-   g'(t)/g(t).
$
 This yields  \eqref{eq:t_u} as the weight function for 
the MLE of scatter for $t$-distribution.  If the second moments of $\x$ exists, then $g$ can be defined so that $\M$ represents the covariance matrix of $\x$, i.e., $\M=\cov(\x)$; see  \cite{ollila2012complex} for details.

When $\x \sim \mathcal E_\pdim(\bo 0,\M,g)$, then the M-functional $\Mn$ in \eqref{eq:Mfun} is related to underlying scatter matrix parameter $\M$  via the relationship
\beq \label{eq:scale}
\Mn=\sca \M, 
\eeq 
where $\sca>0$ is a solution to an equation
\begin{equation}\label{eq:sca} 
\E \bigg[ \psi \bigg( \frac{ r^2 }{\sca} \bigg)  \bigg] = p,  
\end{equation}
 where $\psi(t)=u(t)t$ and $r= \| \M^{-1/2} \x \|$. 
Often $\sca$  needs to be  solved numerically from \eqref{eq:sca} but in some cases an analytic expression 
can be derived. If  $\x \sim \mathcal E_p(\bo 0,\M,g)$ and the used weight function matches with the data distribution, so 
$u(t) \propto-   g'(t)/g(t)$, then $\sigma = 1$.

Next we derive expressions for  $ \E[\tr(\bo C)^2 ]$ and  $\E[\tr(\bo C^2)]$  appearing in the denominator of $\beta_o^{\mathrm{app}}$ in \eqref{eq:beta0id2}. 
These depend on a constant $\psi_1$ (which depends on weight function $u$ via $\psi(t) = u(t) t$) as follows:
\beq \label{eq:psi1}
\psi_1 = \frac{1}{p(p+2) } \E \bigg[   \psi \!\Big( \frac{r^2 }{\sca}\Big)^2   \bigg]  ,
\eeq
 where the statistical expectation is again computed  w.r.t. distribution of    
 the positive random variable $ r = \| \M^{-1/2} \x \|$. 

\begin{lemma} \label{lem:EtrC} Suppose $\x_1, \ldots, \x_n$  is an i.i.d.\ random sample from  $\mathcal E_p(\bo 0, \M,g)$. Then 
\begin{align*}
\E \big[\tr \! \big(\Mor^2\big) \big]  &= 
\left( 1 + \frac{2 \psi_1-1}{n} \right) \tr(\Mn^2) +  \frac{\psi_1}{n}\tr(\Mn)^2    
\end{align*}
and 
\begin{align*}
\E[\tr(\bo C)^2] 
	&=  \frac{2 \psi_1}{n}   \tr(\Mn^2) +   \Big(1+ \frac{\psi_1-1}{n}\Big) \tr(\Mn)^2
\end{align*}
given that expectation \eqref{eq:psi1} exists. 
\end{lemma}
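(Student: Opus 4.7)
The plan is to write $\Mor = n^{-1}\sum_{i=1}^n \bo Y_i$ with i.i.d.\ rank-one summands $\bo Y_i = u(\x_i^\top \Mn^{-1}\x_i)\,\x_i\x_i^\top$, expand both target quantities as double sums, and reduce everything to a single moment calculation. By \eqref{eq:Mfun}, $\E[\bo Y_i] = \Mn$, so the $n(n-1)$ off-diagonal terms in the two double sums contribute $((n-1)/n)\,\tr(\Mn^2)$ to $\E[\tr(\Mor^2)]$ and $((n-1)/n)\,\tr(\Mn)^2$ to $\E[\tr(\Mor)^2]$, respectively. The key observation is that, because each $\bo Y_i$ has rank one, $\tr(\bo Y_i^2) = (\tr \bo Y_i)^2$, so both diagonal sums reduce to the same expectation $\E[\tr(\bo Y_1^2)]$ computed once.

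To evaluate that expectation I would invoke the stochastic representation of an elliptical vector: $\x_1 \stackrel{d}{=} r_1 \M^{1/2} \u_1$ where $r_1 = \|\M^{-1/2}\x_1\|$ is independent of $\u_1$, which is uniformly distributed on the unit sphere of $\R^p$. Using $\Mn = \sca \M$ and $\psi(t) = u(t) t$, one has $\bo Y_1 = \sca\,\psi(r_1^2/\sca)\,\M^{1/2}\u_1\u_1^\top \M^{1/2}$, and the rank-one identity gives
\begin{equation*}
\E[\tr(\bo Y_1^2)] = \sca^2\,\E\!\left[\psi(r_1^2/\sca)^2\right]\cdot \E\!\left[(\u_1^\top \M \u_1)^2\right].
\end{equation*}
The first factor equals $\sca^2 p(p+2)\psi_1$ by the definition \eqref{eq:psi1} of $\psi_1$. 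For the second I would use the standard sphere identity $\E[(\u^\top \A \u)^2] = [\tr(\A)^2 + 2\tr(\A^2)]/[p(p+2)]$ for any symmetric $\A$, which follows from the fourth-moment formula $\E[v_i v_j v_k v_l] = [p(p+2)]^{-1}(\delta_{ij}\delta_{kl}+\delta_{ik}\delta_{jl}+\delta_{il}\delta_{jk})$. After invoking $\tr(\Mn) = \sca\tr(\M)$ and $\tr(\Mn^2) = \sca^2 \tr(\M^2)$ to absorb the scale factor, this yields $\E[\tr(\bo Y_1^2)] = \psi_1\bigl[\tr(\Mn)^2 + 2\tr(\Mn^2)\bigr]$.

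Substituting back into the two double-sum expansions and simplifying $(2\psi_1)/n + (n-1)/n = 1 + (2\psi_1-1)/n$ for the first identity, and $\psi_1/n + (n-1)/n = 1 + (\psi_1-1)/n$ for the second, produces the claimed formulas. The only nontrivial ingredient is the sphere-uniform fourth-moment identity; everything else is bookkeeping, and the rank-one collapse $\tr(\bo Y_i^2) = (\tr \bo Y_i)^2$ is what makes the two seemingly different second-moment computations cost exactly the same work. I do not anticipate any obstacle beyond clean execution of the algebra.
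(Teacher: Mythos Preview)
Your argument is correct and follows the same overall strategy as the paper: expand the double sum, separate diagonal from off-diagonal terms, and use the stochastic representation together with the sphere fourth-moment identity to evaluate the diagonal part. Two organizational choices make your version slightly cleaner than the paper's. First, for the off-diagonal terms you invoke $\E[\bo Y_i]=\Mn$ directly from \eqref{eq:Mfun} and independence, which gives $\E[\tr(\bo Y_i\bo Y_j)]=\tr(\Mn^2)$ and $\E[\tr(\bo Y_i)\tr(\bo Y_j)]=\tr(\Mn)^2$ without any sphere computation; the paper instead pushes the stochastic representation through and uses the separate identity $\E[(\u_1^\top\Mn\u_2)^2]=\tr(\Mn^2)/p^2$ together with $\E[\psi(r^2/\sca)]=p$. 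Second, your rank-one observation $\tr(\bo Y_i^2)=(\tr\bo Y_i)^2$ collapses the two diagonal computations into a single evaluation of $\E[\tr(\bo Y_1^2)]$, whereas the paper carries out the two calculations in parallel. Neither change alters the substance, but both reduce the bookkeeping and the number of auxiliary identities needed.
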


\begin{proof}
The proof is given in Appendix~\ref{app:lem:EtrC}.
\end{proof}

This then yields the following main result.  

\begin{theorem}\label{th:beta0ell}
Let $\x_{1},\ldots,\x_{n}$ denote an i.i.d. random sample from an
elliptical distribution $\mathcal E_p(\bo 0, \M,g)$.  Then the oracle parameter $\be_o^\mathrm{app}$ that
minimizes the MSE in \autoref{th:beta0}     is
\begin{align} \label{eq:beta0ell}
 \be_o^{\mathrm{app}}    
 &=    \dfrac{ n(\gamma-1) }{ n(\gamma-1)(1- 1/n)  + \psi_1(1-1/p)(2 \gamma+p)}    
\end{align}
 where   $\gamma$ is defined in~\eqref{eq:gamma} and $\psi_1$ in \eqref{eq:psi1}. 
 \end{theorem}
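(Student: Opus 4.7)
The plan is straightforward algebraic substitution: plug the closed forms from \autoref{lem:EtrC} into the ratio \eqref{eq:beta0id2} of \autoref{th:beta0} and simplify. No new probabilistic ingredient is needed, since the elliptical assumption has already been used to obtain \autoref{lem:EtrC}.

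First I would rewrite the numerator of \eqref{eq:beta0id2}. Since $\eta_o = \tr(\Mn)/p$, one has $p\eta_o^2 = \tr(\Mn)^2/p$, so the numerator equals $(\gamma-1)\tr(\Mn)^2/p$. To ease bookkeeping I would introduce shorthands $A = \tr(\Mn^2)$ and $B = \tr(\Mn)^2$; the definition \eqref{eq:gamma} of sphericity then gives the crucial identity $A = \gamma B/p$, which is what lets us eliminate $\tr(\Mn^2)$ in favour of $\gamma$ and $\tr(\Mn)^2$ at the end.

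Next I would substitute the two expressions from \autoref{lem:EtrC} into the denominator $\E[\tr(\bo C^2)] - p^{-1}\E[\tr(\bo C)^2]$ and collect the coefficients of $A$ and $B$. A short computation gives a coefficient of $A$ equal to $\bigl(n-1 + 2\psi_1(1-1/p)\bigr)/n$ and a coefficient of $B$ equal to $\bigl(\psi_1(p-1)-(n-1)\bigr)/(np)$. Multiplying the denominator by $np$ and replacing $A$ by $\gamma B/p$ yields
\[
np\cdot(\text{denom}) = B\Bigl[\gamma\bigl(n-1+2\psi_1(1-1/p)\bigr) + \psi_1(p-1) - (n-1)\Bigr].
\]
Grouping $\gamma(n-1) - (n-1) = (n-1)(\gamma-1)$ and combining $2\gamma\psi_1(1-1/p) + \psi_1(p-1) = \psi_1(1-1/p)(2\gamma+p)$ (using $p-1 = p(1-1/p)$) gives the denominator in the form $B[(n-1)(\gamma-1) + \psi_1(1-1/p)(2\gamma+p)]/(np)$.

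Finally, dividing the numerator $(\gamma-1)B/p$ by this expression cancels $B$ and one factor of $p$, producing
\[
\be_o^{\mathrm{app}} = \frac{n(\gamma-1)}{(n-1)(\gamma-1) + \psi_1(1-1/p)(2\gamma+p)},
\]
which is \eqref{eq:beta0ell} after writing $n-1 = n(1-1/n)$. The only real pitfall is keeping the coefficients of $A$ and $B$ straight when cancelling the $1/p$ term from $\E[\tr(\bo C)^2]$; the rest is bookkeeping, and the step that makes the formula collapse cleanly is the regrouping that turns $\psi_1(p-1)+2\gamma\psi_1(1-1/p)$ into the symmetric factor $\psi_1(1-1/p)(2\gamma+p)$.
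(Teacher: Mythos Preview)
Your proposal is correct and follows exactly the approach the paper itself takes: substitute the expressions for $\E[\tr(\bo C^2)]$ and $\E[\tr(\bo C)^2]$ from \autoref{lem:EtrC} into the denominator of \eqref{eq:beta0id2} in \autoref{th:beta0} and simplify. The paper's proof is a one-sentence reference to this substitution, while you have spelled out the algebra in detail; the computations you give are accurate.
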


\begin{proof} Follows from \autoref{th:beta0} after substituting the values for $\E \big[\tr \! \big(\Mor^2\big) \big]$ and  $\E\big[\tr(\bo C)^2\big]$  
derived in \autoref{lem:EtrC} into the denominator of $\be_o^{\mathrm{app}}$ in \eqref{eq:beta0id2}. 
\end{proof}

A closely related result is derived in  \cite[Theorem~1]{ashurbekova2020optimal}. Namely,  \cite{ashurbekova2020optimal} considers oracle estimator as in \eqref{eq:Mor2} but using a shrinkage target equal to identity matrix $\I$ instead of $[\tr(\bo C)/p] \I$ as in this paper. This is due to the fact that \cite{ashurbekova2020optimal} assumes that $\tr(\M)=p$. Another difference is that   \cite{ashurbekova2020optimal}  assumes that $\Mn=\M$ (so $\sca=1$), i.e.,  that the used M-estimator is consistent to the scatter matrix of the underlying elliptical population. This  assumption implies knowledge of the underlying elliptical distribution in which  case it is natural to use the ML-weight  $u_{\text{ML}}(t)= -2 g(t)/g'(t)$ as was done in \cite{ashurbekova2020optimal}. 
 Thus \autoref{th:beta0ell} compared to  \cite[Theorem~1]{ashurbekova2020optimal} holds in the more general case when the scale $\tr(\M)$ is not known {\it apriori} and no assumption on the knowledge of the elliptical distribution is imposed.  Furthermore, in the next subsesction, we extend the result to the complex-valued case which was not considered in \cite{ashurbekova2020optimal}.

\autoref{lem:EtrC}  also allows to construct an unbiased estimate of  $\vartheta = \tr(\Mn^2)/p$ as is  shown next. 

\begin{theorem}\label{th:eta2ell} 
Suppose $\x_1, \ldots, \x_n$  is an i.i.d.\ random sample from  $\mathcal E_p(\bo 0, \M,g)$. 
Then an unbiased estimate of $ \vartheta = \tr(\Mn^2)/p$ for any finite $n$ and any $p$ is
 \begin{align} \label{eq:hatvartheta}
  \hat \vartheta & =   b_n \left(\frac{\tr(\bo C^2)}{p} -  \psi_1 a_n \,   \frac{p}{n}  
	  \, \left[ \frac{\tr(\bo C)}{p} \right]^2 \right)   ,
 \end{align}
where 
\begin{align} \label{eq:an_and_bn}
	a_n =  \frac{n}{n+\psi_1 -1}   \ \mbox{ and } \  
	b_n = \frac{n}{n-1} \left( \frac{n-1 + \psi_1 }{n-1 + 3 \psi_1}  \right)
\end{align}
given that expectation \eqref{eq:psi1}, defining $\psi_1$, exists. 
\end{theorem}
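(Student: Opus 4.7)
The plan is to read off the formulas for $\E[\tr(\bo C^2)]$ and $\E[\tr(\bo C)^2]$ from \autoref{lem:EtrC} and to verify directly, by linearity of expectation, that the specific linear combination defining $\hat\vartheta$ is unbiased for $\vartheta=\tr(\Mn^2)/p$. Concretely, writing $t_1 = \tr(\Mn^2)$ and $t_2=\tr(\Mn)^2$, the lemma gives
\begin{align*}
\E[\tr(\bo C^2)] &= \tfrac{n+2\psi_1-1}{n}\,t_1 + \tfrac{\psi_1}{n}\,t_2,\\
\E[\tr(\bo C)^2] &= \tfrac{2\psi_1}{n}\,t_1 + \tfrac{n+\psi_1-1}{n}\,t_2,
\end{align*}
so $\E[\hat\vartheta]$ becomes a linear combination $A(n,\psi_1)\,t_1/p + B(n,\psi_1)\,t_2/p$ with coefficients depending on the two unknown normalizers $a_n$ and $b_n$.

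I would then impose two conditions to pin down $a_n$ and $b_n$. First, I would force $B(n,\psi_1)=0$, which is the condition that the unwanted $\tr(\Mn)^2$ term disappears. Because $b_n$ multiplies both terms, this condition involves only $a_n$ and yields the equation $\psi_1/n = \psi_1 a_n (n+\psi_1-1)/n^2$, i.e.\ $a_n = n/(n+\psi_1-1)$, matching the statement. Second, with this value of $a_n$ substituted, I would impose $A(n,\psi_1)=1$ to ensure unbiasedness, which is a single linear equation in $b_n$ and determines it uniquely.

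The main obstacle is the algebraic simplification that recovers the compact form of $b_n$. After substituting $a_n = n/(n+\psi_1-1)$ into the coefficient of $t_1$, one has to show
\[
(n+2\psi_1-1)(n+\psi_1-1) - 2\psi_1^2 \;=\; (n-1)(n-1+3\psi_1),
\]
which is a routine but easy-to-mis-sign quadratic identity; once this factorization is in hand, the numerator simplifies to $(n-1)(n-1+3\psi_1)$, cancellation with $(n+\psi_1-1)$ leaves the prefactor $b_n = \tfrac{n}{n-1}\cdot\tfrac{n-1+\psi_1}{n-1+3\psi_1}$ exactly as claimed, and the unbiasedness $\E[\hat\vartheta]=t_1/p=\vartheta$ follows. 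Existence of the expectation in \eqref{eq:psi1} is invoked only to justify applying \autoref{lem:EtrC} in the first place.
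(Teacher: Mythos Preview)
Your proposal is correct and follows essentially the same route as the paper's own proof: both apply linearity of expectation to $\hat\vartheta$, plug in the formulas for $\E[\tr(\bo C^2)]$ and $\E[\tr(\bo C)^2]$ from \autoref{lem:EtrC}, and verify that the result equals $\tr(\Mn^2)/p$. Your write-up is simply more explicit, framing the verification as solving for $a_n$ and $b_n$ via the two coefficient conditions and spelling out the factorization $(n+2\psi_1-1)(n+\psi_1-1)-2\psi_1^2=(n-1)(n-1+3\psi_1)$, whereas the paper compresses all of this into a single ``substituting the values \ldots\ yields $\E[\hat\vartheta]=\tr(\Mn^2)/p$''.
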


\begin{proof} First note that 
\beq \label{eq:expec_vartheta} 
\E[ \hat \vartheta ] = b_n (\E[\tr(\bo C^2)]/p - a_n (p/n) \E[ \tr(\bo C)^2]/p^2).
\eeq 
Then substituting the values of $\E[\tr(\bo C)^2]$ and $\E \big[\tr \! \big(\Mor^2\big) \big]$ from \autoref{lem:EtrC} into  \eqref{eq:expec_vartheta} yields $\E[\hat \vartheta ] =  \tr(\Mn^2)/p$. 
\end{proof} 

It is instructive to consider in more detail the case of Gaussian loss. In this case, $\bo C$ equals the SCM, $\bo C= \S$, and 
the statistic $\hat \vartheta$ no longer  depends on the unknown $\Mn$.  Furthermore, if data is generated from a Gaussian distribution $ \mathcal N_p(\bo 0, \M)$,  then  $\psi_1=1$, $\Mn=\M$   and  $\hat \vartheta $ in \eqref{eq:hatvartheta} reduces to the estimator that is identical to one proposed by \cite[Lemma 2.1]{srivastava2005some} in the case that location is known ($\bom \mu=\bo 0$); see also \cite[Theorem~2]{li2017estimation}. Moreover,   \cite[Theorem~4]{ollila2019optimal} is obtained in the general elliptical case, again  assuming known location parameter ($\bom \mu=\bo 0$).


\subsection{An extension to the complex-valued case} \label{sec:complex_ext}

Consider the case that  $\x_i $, $i=1,\ldots,n$ are complex-valued (i.e., $\x_i \in \C^p$) and represent a random sample from a (circular) complex elliptically symmetric (CES) distribution \cite{ollila2012complex}.   The p.d.f. of a random vector $\x \in \C^p$ with centered (zero mean) 
CES distribution, denoted using same notation $\x \sim \mathcal E_\pdim(\bo 0,\M,g)$, is
\[
	f(\x)
	= C_{\pdim,g} |\M|^{-1} g\big(\x^\hop \M^{-1} \x\big), \qquad \x \in \C^p, 
\]
where  $\M$ is the positive definite  hermitian (PDH) matrix parameter, called the scatter matrix, $g: \left[0,\infty\right) \to \left[0,\infty\right)$ is the
density generator, which is a fixed function that is independent of $\x$ and $\M$, and $C_{\pdim,g}$ is a normalizing constant ensuring
that $f(\x)$ integrates to 1. 

An M-estimator of scatter matrix $\hat \M$ is a PDH matrix that solves an estimating equation
\beq \label{eq:Mest_c}
\hat \M = \frac{1}{n} \sum_{i=1}^n u(\x_i^\hop \hat \M^{-1} \x_i) \x_i \x_i^\hop ,
\eeq 
where $u: [0, \infty) \to [0,\infty)$ is a non-increasing weight function. Again, $u(t)=1$ gives the SCM $\bo S = \frac 1 n \sum_i \x_i \x_i^\hop$, 
Huber's and Tyler's weight functions are  as earlier, stated in \eqref{eq:huber_u}  and 
\eqref{eq:tyl_u}, respectively, whereas $ u_{\mbox{\tiny T}}(t ; \nu) = \frac{2 \pdim  + \nu }{\nu +2 t}$ corresponds to MLE 
of the scatter matrix parameter when sampling from a $p$-variate complex $t$-distribution with  $\nu$ d.o.f.  We refer to
 \cite{ollila2012complex,zoubir2018robust} for more details.  We may now define the shrinkage M-estimator as in \eqref{eq:shrinkMest}. Definitions 
  \eqref{eq:Mfun}-\eqref{eq:Mor2} hold also in the complex-valued case with obvious modifications  (replacing transpose by the Hermitian transpose). 
 
\autoref{th:beta0}  did not require an assumption that random vectors are real-valued, i.e., it  holds also when  $\x_1, \ldots, \x_n$ are i.i.d. complex-valued random vectors. This means that  we only need to derive expectations in \autoref{lem:EtrC} in the case of random sampling from a CES distribution. First,  we define the parameter $\psi_1$ in the complex-valued case as  
\beq \label{eq:psi1_c}
\psi_1 = \frac{1}{p(p+1) } \E \Big[   \psi \!\bigg( \frac{r^2}{\sca}\Big)^2   \bigg]  ,
\eeq
 where the expectation is w.r.t.   $r= \| \M^{-1/2} \x\|$,   where $\x  \sim \mathcal E_p(\bo 0, \M,g)$. The analog of \autoref{lem:EtrC} to complex case is given next. 
 
 \begin{lemma} \label{lem:EtrC_complex} Suppose $\x_1, \ldots, \x_n$  is an i.i.d.\ random sample  from a (circular) complex elliptically symmetric distribution $ \mathcal E_p(\bo 0, \M,g)$. Then 
\begin{align*}
\E \big[\tr \! \big(\Mor^2\big) \big]  
	&= \left( 1 + \frac{\psi_1-1}{n} \right) \tr(\Mn^2) +  \frac{\psi_1}{n}\tr(\Mn)^2   
\end{align*}
and 
\begin{align*}
\E[\tr(\bo C)^2] 
	&=  \frac{\psi_1}{n}   \tr(\Mn^2) +   \Big(1+ \frac{\psi_1-1}{n}\Big) \tr(\Mn)^2
\end{align*}
given that expectation \eqref{eq:psi1_c} exists. 
\end{lemma}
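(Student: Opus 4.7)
The plan is to mirror the derivation used for \autoref{lem:EtrC} in the real case, replacing the real-sphere fourth-moment identities with their complex-sphere counterparts. First, I would invoke the stochastic representation of the CES distribution: every $\x_i$ admits the decomposition $\x_i = r_i \M^{1/2} \bo u_i$, where $r_i = \|\M^{-1/2}\x_i\|$, each $\bo u_i$ is uniformly distributed on the complex unit sphere $\{\bo z \in \C^p : \|\bo z\|=1\}$, $r_i$ is independent of $\bo u_i$, and the pairs $(r_i, \bo u_i)$ are i.i.d.\ across $i$. Using $\Mn = \sca \M$, the quadratic form collapses to $\x_i^\hop \Mn^{-1} \x_i = r_i^2/\sca$, so that, with $w_i = \psi(r_i^2/\sca)$,
\[
	u(\x_i^\hop \Mn^{-1} \x_i)\, \x_i \x_i^\hop = w_i\, \Mn^{1/2} \bo u_i \bo u_i^\hop \Mn^{1/2},
\]
and consequently
\[
	\tr(\bo C) = \frac{1}{n} \sum_{i=1}^n w_i \, \bo u_i^\hop \Mn \bo u_i, \qquad
	\tr(\bo C^2) = \frac{1}{n^2} \sum_{i,j=1}^n w_i w_j \, | \bo u_i^\hop \Mn \bo u_j |^2 .
\]

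Squaring the first trace and splitting both double sums into their diagonal ($i=j$) and off-diagonal ($i \neq j$) parts, independence of $r$ from $\bo u$ together with independence across samples reduces every summand's expectation to a product of $\E[w] = p$ (from \eqref{eq:sca}) or $\E[w^2] = p(p+1)\psi_1$ (from \eqref{eq:psi1_c}) with a second- or fourth-order moment of $\bo u$ on the complex unit sphere. The only new ingredient relative to the real case is the complex-sphere fourth-moment identity. From the standard relation $\E[u_a \bar u_b u_c \bar u_d] = (\delta_{ab}\delta_{cd} + \delta_{ad}\delta_{bc})/(p(p+1))$ for the entries of $\bo u$ uniform on the complex unit sphere, one obtains for any Hermitian $\A, \bo B$
\[
	\E\bigl[(\bo u^\hop \A \bo u)(\bo u^\hop \bo B \bo u)\bigr] = \frac{\tr(\A \bo B) + \tr(\A)\tr(\bo B)}{p(p+1)},
\]
in which, compared to the real-case identity, the denominator has dropped from $p(p+2)$ to $p(p+1)$ and the cross term $2\tr(\A\bo B)$ has dropped to $\tr(\A\bo B)$. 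Conditioning on one variable at a time also gives, for independent $\bo u_i, \bo u_j$, $\E[|\bo u_i^\hop \Mn \bo u_j|^2] = \tr(\Mn^2)/p^2$ and $\E[(\bo u_i^\hop \Mn \bo u_i)(\bo u_j^\hop \Mn \bo u_j)] = \tr(\Mn)^2/p^2$.

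Substituting these four moments, the $n$ diagonal terms contribute $\psi_1 (\tr(\Mn^2) + \tr(\Mn)^2)/n$ to each of the two expectations, while the $n(n-1)$ off-diagonal terms contribute $((n-1)/n) \tr(\Mn^2)$ to $\E[\tr(\bo C^2)]$ and $((n-1)/n) \tr(\Mn)^2$ to $\E[\tr(\bo C)^2]$; recombining the diagonal and off-diagonal pieces and simplifying the coefficients of $\tr(\Mn^2)$ and $\tr(\Mn)^2$ yields exactly the two identities in the statement. The main (and essentially only) obstacle is pinning down the complex-sphere fourth-moment identity with the correct $p(p+1)$ normalization, together with keeping the bookkeeping of the propagated factor of $2$ straight; once that is in place, the remainder of the proof is a direct, line-for-line replay of the argument behind \autoref{lem:EtrC}.
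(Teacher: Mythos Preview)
Your proposal is correct and follows essentially the same route as the paper: the paper's proof invokes the stochastic representation, splits the double sums into diagonal and off-diagonal parts, and then replaces the real-sphere identities \eqref{eq:Expect1}--\eqref{eq:Expect2} by their complex analogues $\E[(\bo u^\hop \Mn \bo u)^2] = (\tr(\Mn^2)+\tr(\Mn)^2)/(p(p+1))$ and $\E[|\bo u_1^\hop \Mn \bo u_2|^2] = \tr(\Mn^2)/p^2$, together with $\E[w^2]=p(p+1)\psi_1$. Your more general formula $\E[(\bo u^\hop \A \bo u)(\bo u^\hop \bo B \bo u)] = (\tr(\A\bo B)+\tr(\A)\tr(\bo B))/(p(p+1))$ specializes to the paper's identity when $\A=\bo B=\Mn$, so the two arguments coincide.
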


\begin{proof}
The proof is given in Appendix~\ref{app:lem:EtrC_complex}.
\end{proof}

Plugging in the expectations above into $\be_o^{\text{app}}$ derived in  \autoref{th:beta0}  yields the following result.  

\begin{theorem}\label{th:beta0ell_complex}
Let $\x_{1},\ldots,\x_{n}$ denote an i.i.d. random sample from a (circular) $p$-variate complex elliptically symmetric  
distribution $\mathcal E_p(\bo 0, \M,g)$ and assume that expectation \eqref{eq:psi1_c} exists. 
 Then the oracle parameter $\be_o^\mathrm{app}$ that
minimizes the MSE in \autoref{th:beta0}  is
\begin{align} \label{eq:beta0ell_c}
 \be_o^{\mathrm{app}}   
 &=    \dfrac{n(\gamma-1)}{  n(\gamma-1)(1-1/n)  + \psi_1(1-1/p)(\gamma+p)}     , 
\end{align}
 where  $\gamma$ is defined in~\eqref{eq:gamma} and $\psi_1$ in \eqref{eq:psi1_c}. 

Furthermore, $\hat \vartheta$ defined as in \eqref{eq:hatvartheta} with 
\begin{align*}
	a_n &=  \frac{n}{n+\psi_1 -1}    \ \mbox{ and } \  
	b_n &= \frac{n}{n-1} \left( \frac{n-1 + \psi_1 }{n-1 + 2 \psi_1}  \right)
\end{align*}
  is an unbiased estimate of $ \vartheta = \tr(\Mn^2)/p$ for any finite $n$ and any $p$. 
\end{theorem}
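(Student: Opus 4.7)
The plan is to obtain both assertions by direct substitution of the complex-case moments from Lemma \ref{lem:EtrC_complex} into the general formulas already available: \eqref{eq:beta0id2} from Theorem \ref{th:beta0} for the first claim, and \eqref{eq:expec_vartheta} (established in the proof of Theorem \ref{th:eta2ell}) for the second. Throughout, I would write $A = \tr(\Mn^2)$ and $B = \tr(\Mn)^2$, so that $\eta_o^2 = B/p^2$ and $\gamma = pA/B$, and carry out the resulting algebra.

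For the first claim, the numerator in \eqref{eq:beta0id2} becomes $p(\gamma-1)\eta_o^2 = (\gamma-1)B/p$. Plugging the complex-case expressions into the denominator gives
\[
\E[\tr(\bo C^2)] - p^{-1}\E[\tr(\bo C)^2] = \bigl(1+\tfrac{\psi_1-1}{n}\bigr)(1-\tfrac{1}{p})A + \tfrac{\psi_1}{n}(1-\tfrac{1}{p})B - \text{cross term correction},
\]
which after grouping is a linear combination of $A$ and $B$. Substituting $A = \gamma B/p$ cancels the common factor $B/p$ between numerator and denominator, and multiplying through by $n$ leaves $(n-1)(\gamma-1) + \psi_1\bigl[\gamma(1-1/p)+(p-1)\bigr]$ in the denominator. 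A last simplification $\gamma(1-1/p) + (p-1) = (1-1/p)(\gamma+p)$, obtained by factoring out $(p-1)/p$, matches the stated form \eqref{eq:beta0ell_c}.

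For the unbiasedness claim, I would substitute Lemma \ref{lem:EtrC_complex} into the identity $\E[\hat\vartheta] = (b_n/p)\bigl(\E[\tr(\bo C^2)] - (\psi_1 a_n/n)\E[\tr(\bo C)^2]\bigr)$, which is exactly \eqref{eq:expec_vartheta} rewritten. This expresses $\E[\hat\vartheta]$ as a linear combination of $A$ and $B$. Imposing that the coefficient of $B$ vanish forces $a_n(n+\psi_1-1)/n = 1$, hence $a_n = n/(n+\psi_1-1)$. Then requiring the coefficient of $A$ to equal $1/p$ reduces, using the difference-of-squares identity $(n+\psi_1-1)^2 - \psi_1^2 = (n-1)(n+2\psi_1-1)$, to $b_n = \tfrac{n}{n-1}\cdot\tfrac{n-1+\psi_1}{n-1+2\psi_1}$, which is the claimed value.

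There is no real conceptual obstacle, since both statements follow by substitution once Lemma \ref{lem:EtrC_complex} is in hand. The only points requiring a little care are the two bookkeeping identities mentioned above, namely $\gamma(1-1/p)+(p-1) = (1-1/p)(\gamma+p)$ in the first part and the difference-of-squares factorisation in the second, both of which are straightforward.
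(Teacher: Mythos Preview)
Your proposal is correct and follows the same route the paper takes: the paper simply states that the result follows by plugging the expectations from Lemma~\ref{lem:EtrC_complex} into $\be_o^{\mathrm{app}}$ from Theorem~\ref{th:beta0}, and the unbiasedness claim is handled exactly as in the proof of Theorem~\ref{th:eta2ell} via \eqref{eq:expec_vartheta}. Your write-up supplies the algebraic bookkeeping that the paper leaves implicit, but the method is identical.
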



\subsection{Computing the shrinkage parameter} \label{subsec:pract_comp}

In order to construct a data-adaptive estimate of  $\beta_o^{\text{app}}$ (either in real- or complex-valued cases), all we need to estimate is the sphericity $\gamma$ and the constant $\psi_1$.   An estimate $\hat \psi_1$ of $\psi_1$ is constructed separately for each  weight function (Gaussian, Huber's, MVT- and Tyler's weight function) in  \autoref{sec:special_cases}. However, it  is also possible to use an empirical (sample mean) version of  \eqref{eq:psi1}. 
Next we discuss computation of the sphericity estimator $\hat \gamma$. 

 As an estimator $\hat \gamma$  we use the estimate derived in \cite{zhang2016automatic}, which was named as Ell1-estimator in \cite{ollila2019optimal},  and defined as 
\begin{align}\label{eq:gammahat}
	\hat \gamma^{\text{Ell1*}} =  \frac{n}{n-1}\left(  p \, \tr\left( \frac{1}{n} \sum_{i=1}^{n}  \frac{\x_{i}  \x_{i}^\top}{\|\x_{i}\|^{2}}  \right)  - \frac{p}{n}\right) 
\end{align}
which for complex-valued case is defined analogously (transpose replaced with the Hermitian transpose). 

Next recall that  $\hat \vartheta$ defined in \eqref{eq:hatvartheta} is an unbiased estimator of $\tr(\Mn^2)/p$. This statistics depends on  $\bo C$ and $\psi_1$  which are unknown, but a plug-in estimate of $\hat \vartheta$ can be constructed by replacing $\bo C$  and $\psi_1$ with $\hat \M$ 
and $\hat \psi_1$, respectively. Dividing this plug-in statistic further by $(\tr(\hat \M)/p)^2$,  leads to another estimator of sphericity, named as Ell2-estimator, and defined as
\begin{align}\label{eq:gammahat_ell2}
	\hat \gamma^{\text{Ell2*}} =  \hat b_n   \left( \frac{p\tr(\hat \M^2)}{\tr(\hat \M)^2} -  \hat \psi_1 \hat a_n  \frac{p}{n}  \right) , 
\end{align}
where the constants $\hat a_n $ and $ \hat b_n$ are as defined in \autoref{th:eta2ell} (and \autoref{th:beta0ell_complex} in complex case) but with $\psi_1$ replaced by its estimate $\hat \psi_1$.  When one uses Gaussian weight function, then $\hat \M=\bo S$ and $\hat \psi_1 = 1+ \hat \kappa$, where $\hat \kappa$ is an estimate of elliptical kurtosis  ({\it cf.} \autoref{subsec:RSCM}). In  this case, $\hat \gamma^{\text{Ell2*}}$  corresponds to Ell2-estimator of sphericity proposed in \cite[Sect.~IV.B]{ollila2019optimal}. 

In order to guarantee that the estimators remain in the valid interval, $1\leq \gamma \leq p$, we use 
\begin{equation}\label{eq:hatgamma_ell1_eq2}
   \hat \gamma^{\textup{Ell1}}  
   = \min(p, \max(1, \hat \gamma^{\textup{Ell1*}}))
\end{equation}
as our final estimator (and similarly for Ell2-estimator). The related shrinkage parameter can then computed as  
\[
\beta= \beta_o^{\mathrm{app}}(\hat \gamma^{\text{Ell1}}, \hat \psi_{1}), 
\]
and again similarly for Ell2-estimator.


\section{Important special cases} \label{sec:special_cases} 
\subsection{Regularized SCM (RSCM) estimator}   \label{subsec:RSCM} 

If one uses Gaussian weight function $u_{\text{G}}(t) \equiv 1$, then a necessary assumption is that the underlying elliptical distribution possesses finite 4th-order moments. As discussed earlier, one may then assume w.l.o.g. that the scatter matrix parameter equals the covariance matrix, i.e.,  $\M= \cov(\x)$. 
When $u_{\text{G}}(t) \equiv 1$, one has that  $\hat \M= \S$ and  $\bo C_\be=\S_\be$ and hence $\be_o  = \be_o^\mathrm{app}$, meaning that the approximate MMSE solution is exact in this case. Finally,  we may relate $\psi_1$ with an elliptical kurtosis~\cite{muirhead:1982} parameter $\kappa$:
\beq \label{eq:kappa} 
\psi_1 = 1+ \ka=  \begin{cases} 
\dfrac{\E [r^4]}{p(p+2)}   , & \mbox{real case} \\
 \dfrac{\E [ r^4]}{p(p+1)} ,  & \mbox{complex case}  
 \end{cases} 
 \eeq  
 where the expectation is over the distribution of the random variable $r=  \| \M^{-1/2}\x \|$.  The elliptical kurtosis parameter is defined as a generalization of the kurtosis parameter to the vector case, and as such it vanishes (so $\ka=0$) when $\x$ has MVN distribution (denoted $\x  \sim \mathcal N_p(\bo 0, \M)$).  
 Since $\bo S$ exists for any $n \geq 1$, we can drop the assumption that  $n>p$ in this case. 
 
\begin{corollary}  \label{cor:RSCM} Let $\x_{1},\ldots,\x_{n}$ denote an i.i.d. random sample from an (real or complex)
elliptical distribution $\mathcal E_p(\bo 0, \M,g)$ with finite 4th order moments and covariance matrix $\M=\cov(\x)$. 
Then for the shrinkage SCM estimator $\S_\be$ in \eqref{eq:LW} one has that 
\begin{align} 
\be_o   &=  \arg \min_\be  \, \mathbb{E} \big[ \big\| \S_\be - \M \|^2_{\mathrm{F}} \big] 
=\dfrac{ n(\gamma-1)}{   n(\gamma-1) + p + a} ,  \label{eq:beta0_ell_scm}
\end{align} 
where 
\[
a = 
 \begin{cases} 
 \ka(2 \gamma(1-1/p) + p-1)   + \gamma (1-2/p), &\mbox{real case}  \\ 
 \ka(\gamma(1-1/p) + p-1)   - \gamma/p,&\mbox{complex case}
 \end {cases}  
\] 
\end{corollary}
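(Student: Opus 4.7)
The plan is to observe that \autoref{cor:RSCM} is essentially a direct specialization of \autoref{th:beta0ell} (real case) and \autoref{th:beta0ell_complex} (complex case) to the Gaussian weight function, followed by algebraic rearrangement. First I would note that when $u(t) \equiv 1$, the M-estimating equation \eqref{eq:Mest} is no longer implicit: its solution is simply $\hat \M = \bo S$, and likewise the one-step estimator $\bo C$ in \eqref{eq:Mor} coincides with $\bo S$. Consequently, the shrinkage estimator $\hat \M_\be$ equals $\bo S_\be$ and the oracle $\bo C_\be$ equals $\bo S_\be$, so the approximate oracle problem \eqref{eq:RFegSCM_oracle} and the exact problem \eqref{eq:optimal_beta_o} coincide, giving $\be_o = \be_o^{\mathrm{app}}$. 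This also justifies dropping the assumption $n>p$, since $\bo S$ is defined for any $n \geq 1$.

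Next I would identify the constant $\psi_1$ in this setting. With $u \equiv 1$, one has $\psi(t) = t$ and $\sigma = 1$ (since by \eqref{eq:sca} the equation $\E[r^2/\sigma] = p$ together with $\E[r^2]=p$ for the normalized quadratic form forces $\sigma=1$). Therefore \eqref{eq:psi1} and \eqref{eq:psi1_c} reduce respectively to $\E[r^4]/(p(p+2))$ and $\E[r^4]/(p(p+1))$, which by \eqref{eq:kappa} equal $1+\kappa$. Thus $\psi_1 = 1+\kappa$ in both the real and complex cases.

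The final step is to substitute $\psi_1 = 1+\kappa$ into the expressions \eqref{eq:beta0ell} and \eqref{eq:beta0ell_c} and massage the denominator into the form $n(\gamma-1) + p + a$. In the real case, the denominator becomes
\begin{equation*}
n(\gamma-1) - (\gamma-1) + (1+\kappa)(1-1/p)(2\gamma+p),
\end{equation*}
and expanding $(1-1/p)(2\gamma+p) = 2\gamma(1-1/p) + (p-1)$ lets me collect the $\kappa$-free terms into $n(\gamma-1) + p + \gamma(1-2/p)$ and the $\kappa$-dependent terms into $\kappa[2\gamma(1-1/p) + (p-1)]$, which yields the stated $a$. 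The complex case is analogous, using $(1-1/p)(\gamma+p) = \gamma(1-1/p) + (p-1)$, and after collecting terms produces $a = -\gamma/p + \kappa[\gamma(1-1/p) + (p-1)]$.

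No step presents a real obstacle; the only thing worth flagging is a consistency check that the expansion actually matches the claimed constants, in particular that the $-(\gamma-1)$ coming from $(1-1/n)$ combines cleanly with the $(p-1)$ piece from the $(1-1/p)(2\gamma+p)$ factor to produce exactly $p + \gamma(1-2/p)$ (real) or $p - \gamma/p$ (complex) as the constant (in $\kappa$) contribution beyond $n(\gamma-1)$. Once that bookkeeping is done, the corollary follows immediately.
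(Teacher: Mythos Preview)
Your proposal is correct and follows essentially the same route as the paper's own proof: observe that for the Gaussian weight $\hat\M=\bo C=\S$ so $\be_o=\be_o^{\mathrm{app}}$, identify $\psi_1=1+\kappa$ from \eqref{eq:psi1}/\eqref{eq:psi1_c} and \eqref{eq:kappa}, and substitute into \autoref{th:beta0ell} and \autoref{th:beta0ell_complex}. The only difference is that you spell out the denominator algebra explicitly, whereas the paper simply asserts that the substitution yields the stated $a$.
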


\begin{proof}  The result follows from \autoref{th:beta0ell} and~\autoref{th:beta0ell_complex} since $\bo C_\be = \bo S_\be$ and the M-functional for Gaussian loss is  $\Mn = \cov(\x) = \M$ and $\sigma=1$.  Since for Gaussian loss, $\psi(t)= t$, we notice from  \eqref{eq:psi1} and  \eqref{eq:psi1_c} hat 
\beq \label{eq:psi1_RSCM}
\psi_1 = 1 + \kappa. 
\eeq
 Plugging $\psi_1=1+\ka$ into  $\beta_o^{\text{app}}$ in \autoref{th:beta0ell} and \autoref{th:beta0ell_complex} yields  the stated expressions, respectively. 
\end{proof}

 The elliptical kurtosis parameter $\kappa$ can be easily estimated using the following relationship to kurtosis even in the cases when $p>n$. 
First, recall that kurtosis of a random variable $x$ in the real and complex case is defined as 
\beq \label{eq:kurt} 
 \mathsf{kurt}(x) = \frac{\E[ x^4]}{ (\E[x^2])^2} - 3 
 \quad \mbox{and} \quad  \mathsf{kurt}(x)  =    \frac{\E[ | x|^4]}{ (\E[ |x|^2])^2} - 2 , 
 \eeq 
 respectively. Kurtosis vanishes when the random variable has real or complex Gaussian distribution with variance $\E[|x|^2]$. 
 The following result establishes the relationship of elliptical kurtosis parameter with marginal kurtosis. 
 
 \begin{lemma}  \label{lem:kappa} 
Assume that  $\x$ is a random vector from real or complex elliptically symmetric distribution with covariance matrix $\M=\cov(\x)$ possessing finite 4th order moments.  
Then 
\beq  \label{eq:kappa_relation} 
 \kappa =  \begin{cases} \frac{1}{3}  \mathsf{kurt}(x_j), & \mbox{real case} \\    \frac{1}{2}  \mathsf{kurt}(x_j), & \mbox{complex case}  \end {cases}  
\eeq 
where $x_j$ is any $j$th component of $\x$ ($j \in \{1,\ldots,p\}$).
\end{lemma}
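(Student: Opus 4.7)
The plan is to exploit the stochastic representation of an elliptical vector, namely $\x \stackrel{d}{=} r\, \M^{1/2} \bo u$ with $\bo u$ uniformly distributed on the real (or complex) unit sphere of $\C^p$ and independent of the nonnegative radial variable $r = \|\M^{-1/2}\x\|$. Everything then reduces to computing the second and fourth moments of a single coordinate $x_j = r \cdot \bo a^{\hop} \bo u$, where $\bo a = \M^{1/2}\bo e_j$ has squared norm $\|\bo a\|^2 = \sigma_{jj} := [\M]_{jj}$, and comparing them against the definition of $\psi_1 = 1 + \kappa$.

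First I would record the classical sphere-moment identities that I need. By rotational invariance, for any unit vector $\bo e$ and $\bo u$ uniform on the real sphere $S^{p-1}$, one has $\E[(\bo e^{\top}\bo u)^2] = 1/p$ and $\E[(\bo e^{\top}\bo u)^4] = 3/[p(p+2)]$ (which follows from $(\bo e^\top \bo u)^2 \sim \mathrm{Beta}(1/2,(p-1)/2)$). In the complex case, with $\bo u$ uniform on the complex unit sphere, $|\bo e^{\hop}\bo u|^2 \sim \mathrm{Beta}(1,p-1)$ gives $\E[|\bo e^{\hop}\bo u|^2] = 1/p$ and $\E[|\bo e^{\hop}\bo u|^4] = 2/[p(p+1)]$. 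Scaling by $\|\bo a\|$ and using independence of $r$ and $\bo u$, I obtain in the real case
\[
\E[x_j^2] = \frac{\E[r^2]}{p}\,\sigma_{jj}, \qquad \E[x_j^4] = \frac{3\,\E[r^4]}{p(p+2)}\,\sigma_{jj}^2,
\]
and analogous identities with denominators $p$ and $p(p+1)$ (and factor $2$) in the complex case.

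Next, since $\cov(\x) = \M$, the constraint $\E[x_j^2] = \sigma_{jj}$ (respectively $\E[|x_j|^2] = \sigma_{jj}$) forces $\E[r^2] = p$ in both the real and complex settings. Substituting into the kurtosis definition \eqref{eq:kurt} gives, in the real case,
\[
\mathsf{kurt}(x_j) = \frac{\E[x_j^4]}{\sigma_{jj}^2} - 3 = \frac{3\,\E[r^4]}{p(p+2)} - 3 = 3\left(\frac{\E[r^4]}{p(p+2)} - 1\right) = 3(\psi_1 - 1) = 3\kappa,
\]
and, in the complex case, the subtracted constant is $2$ and the denominator is $p(p+1)$, which yields $\mathsf{kurt}(x_j) = 2\kappa$. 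Rearranging gives \eqref{eq:kappa_relation}.

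I expect no serious obstacle; the only care point is to use the correct sphere-moment formula in each case (the factors $3$ and $p+2$ in the real case versus $2$ and $p+1$ in the complex case), and to remember that the normalization of $r$ implicit in the definition $r = \|\M^{-1/2}\x\|$ is exactly what makes $\E[r^2] = p$, so no ambiguity about the scale of the density generator remains. Independence of $r$ and $\bo u$, which is the standard stochastic representation for (circular) elliptical distributions, is the one external fact the argument relies on.
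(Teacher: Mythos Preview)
Your proposal is correct and follows essentially the same route as the paper: both arguments use the stochastic representation of an elliptical vector as a product of an independent radial part and a uniform direction, together with the standard second and fourth moment identities on the sphere, to reduce $\mathsf{kurt}(x_j)$ to a multiple of $\E[r^4]/[p(p+2)]-1$ (respectively $\E[r^4]/[p(p+1)]-1$). The only cosmetic difference is that the paper first passes to the standardized vector $\bo z=\M^{-1/2}\x$ and then invokes the fact that all one-dimensional marginals of an elliptical distribution share the same kurtosis, whereas you work directly with $x_j=r\,\bo a^{\hop}\bo u$ and handle the scaling through $\|\bo a\|^2=\sigma_{jj}$; your treatment of the complex case is also explicit, which the paper only indicates.
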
 

\begin{proof}
The proof is given in Appendix~\ref{app:lem:kappa}.
\end{proof}

Since all marginal variables possess the same kurtosis,  an estimate $\hat \kappa$ can be formed simply as the mean of marginal sample kurtosis statistics. 
This is the same estimate of the elliptical kurtosis proposed in  \cite{ollila2019optimal}.
Note that \cite{ollila2019optimal} only considered the real-valued case, and thus \autoref{cor:RSCM} allows us to extend the RSCM estimator in \cite{ollila2019optimal} to complex-valued case.

In the sequel, we use acronym {\bf RSCM-Ell1} to refer to estimator $\S_{\be}$ with $\be$ computed 
as $\be=\beta_o(\hat \kappa,\hat \gamma^{\text{Ell1}})$ with $\beta_o$ given by   \eqref{eq:beta0_ell_scm} and $\hat \gamma^{\text{Ell1}}$ being the estimate of sphericity defined in \eqref{eq:hatgamma_ell1_eq2} and  $\hat \kappa$ an estimate of elliptical kurtosis described above.  An {\bf RSCM-Ell2} estimator is  defined similarly but now using Ell2-estimator of sphericity. 

A natural competitor for RSCM-Ell1 or RSCM-Ell2 estimators (at least in the real-valued case) is  the estimator proposed by Ledoit and Wolf \cite{ledoit2004well},  referred to as {\bf LWE}.   We note that  LWE  also uses RSCM $\S_{\be}$, but the parameter $\be$ is computed in a different manner.  An extra benefit of our approach is that an estimator of the optimal shrinkage parameter can be computed for real- or complex-valued observations while LWE assumes real-valued observations.

\subsection{Regularized Huber's M-estimator (RHub)}  \label{subsect:huber} 

Next consider the Huber's weight function $u_{\mbox{\tiny H}} (t;c)$ in  \eqref{eq:huber_u}. Note that $b>0$ is a scaling constant; if  $\hat{\bom \Sigma}$ is Huber M-estimator of scatter when $b = 1$, then the Huber M-estimator of scatter when $b = b_o$ is simply $b_o\hat{\bom \Sigma}$. The scaling constant $b$ is usually chosen so that the resulting scatter estimator is Fisher consistent  for the covariance matrix at MVN distribution, i.e, $\sigma=1$ when $ \x \sim \mathcal N_\pdim(\bo 0, \M)$. In the real case, this holds when 
\[
b = 
 F_{\chi^2_{\pdim+2}}(c^2) + c^2(1-F_{\chi^2_{\pdim}}(c^2))/\pdim,  
\] 
where $F_{\chi^2_p}(\cdot)$ denotes the cumulative distribution function (c.d.f.) of chi-squared distribution with $p$ d.o.f. 
Since  $r^2=\|  \M^{-1/2} \x \|^2$ has a  $\chi^2_p$-distribution when $ \x \sim \mathcal N_\pdim(\bo 0, \M)$, the tuning constant $c^2$  is chosen  as $q$th upper quantile of $\chi^2_\pdim$-distribution:
\beq \label{eq:csq} 
q = \Pr( r^2 \leq c^2 )  \Leftrightarrow  F^{-1}_{\chi^2_p}(q)  = c^2 
\eeq  
for some $q \in (0,1]$.  Tuning constant $c$ and scaling factor $b$ can be determined similarly in the complex-valued case; see \cite{ollila2012complex,zoubir2018robust}  for details. 

Let us define a winsorized observation $\bo w$ based on $\x \sim \mathcal E_p(\bo 0, \M,g)$ as 
\[
\bo w  = \wins(\x; \M, c)= \frac{1}{\sqrt{b}} \begin{cases}  \x,  &  \,  \|  \M^{-1/2} \x\|^2  \leqslant  c^2 \\ 
 c \dfrac{\x}{\| \M^{-1/2} \x\|},  & \,   \| \M^{-1/2} \x\|^2 > c^2 \end{cases}  
\]
 where $c$ is the threshold $c$ of Huber's weight function and $b$ is the respective scaling factor. 
The winsorized  r.v. $\bo w$ also has an elliptically symmetric distribution since the contours remain elliptical in shape (so the p.d.f. is still defined by \eqref{eq:pdf_ES} but for a truncated density generator $g$) and thus it shares the properties of elliptical random vectors. 
 
If we take $\sigma=1$ (which holds at least when $\x$ has MVN distribution), then the constant  $\psi_1$ can be written as 
\begin{align*}  
\psi_1   &= \frac{ \E[  \psi_{\mbox{\tiny H}}^2( \|  \M^{-1/2} \x \|^2 ;c )]}{p (p+2)} 
= \frac{\E \big[  \|  \M^{-1/2}  \bo w \|^4 \big]  }{p(p+2)}   \\ 
&= 1 +  \kappa_{\bo w}
\end{align*} 
where  $\kappa_{\bo w}$ is the elliptical kurtosis parameter ({\it cf.}  \autoref{lem:kappa}) of an elliptical random vector $\bo w$.  
An estimate $\hat \psi_1$  of $\psi_1$ can be then calculated similarly  as $\psi_1$ for RSCM-Ell1 or RSCM-Ell2 estimators defined earlier (recall relation 
\eqref{eq:psi1_RSCM}).  The only difference is that $\kappa$ is now computed for winsorized data $\{\bo w_i \}_{i=1}^n$, where $\bo w_i=\wins(\x_i;  \hat \M, c)$ and $\hat \M$ denotes the Huber's M-estimator. 

In the sequel, we use acronym {\bf RHub-Ell1}   or {\bf RHub-Ell2} to refer to shrinkage M-estimator $\hat \M_\be$ that uses Huber's weight $u(\cdot) = u_{\mbox{\tiny H}}(\cdot; c) $ with  threshold  $c^2$ determined from \eqref{eq:csq} for user specified $q$ and shrinkage parameter  $\be = \beta_o^{\mathrm{app}}(\hat \gamma^{\text{Ell1}}, \hat \psi_{1})$. 
or $\be = \beta_o^{\mathrm{app}}(\hat \gamma^{\text{Ell2}}, \hat \psi_{1})$, respectively. 
 
\subsection{Regularized Tyler's M-estimator (RTyl)}   \label{subsec:RTyl} 

Let $\bo V$ denote a \emph{shape matrix} (normalized scatter matrix), defined as 
\[
\bo V =  p \M/\tr(\M) ,
\]
where $\M$ denotes the scatter matrix parameter of the ES distribution. Note that $\tr(\bo V)=p$.  If one uses Tyler's weight function in  \eqref{eq:tyl_u}, then  \eqref{eq:scale} holds with  $\sigma=p/\tr(\M)$, i.e., $\Mn = \bo V$, that is, 
Tyler's M-estimator is an estimate of  the shape matrix. The following result hence follows at once from \autoref{th:beta0ell} and \autoref{th:beta0ell_complex}.

\begin{corollary}  \label{cor:beta_tyl} Let $\x_{1},\ldots,\x_{n}$ denote an i.i.d. random sample from a (real or complex) 
elliptical distribution $\mathcal E_p(\bo 0, \M,g)$. When using Tyler's weight \eqref{eq:tyl_u}, it holds that 
\begin{align*} 
\be_o^{\mathrm{app}} &=  \arg \min_\be  \, \mathbb{E} \big[ \big\| \bo C_\be- \bo V \|^2_{\mathrm{F}} \big]  \\
&= \begin{cases}  \dfrac{ n(\gamma-1) }{  n(\gamma-1)(1-\frac{1}{n})  +   \frac{p-1}{p+2}(2 \gamma+p)} , &\mbox{real case}  \vspace{0.1cm} \\ 
\dfrac{ n(\gamma-1)}{  n(\gamma-1)(1-\frac{1}{n})  +   \frac{p-1}{p+1}(\gamma+p)} ,  &\mbox{complex case}\end{cases} 
\end{align*} 
\end{corollary}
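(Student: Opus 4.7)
The plan is to derive this corollary as a direct specialization of \autoref{th:beta0ell} (real case) and \autoref{th:beta0ell_complex} (complex case) to Tyler's weight function. The only ingredients that change when the weight function changes are the M-functional target $\Mn$, the scale $\sigma$, and the constant $\psi_1$; everything else in the master formulas is already generic. So the entire proof reduces to identifying these three quantities for $u_{\mbox{\tiny Tyl}}(t) = p/t$ and substituting.

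First I would record the identification of the target. For Tyler's weight, $\psi(t) = u(t)t \equiv p$, so equation \eqref{eq:sca} becomes $\E[\psi(r^2/\sigma)] = p$, which holds identically for every $\sigma>0$. Combined with the normalization $\tr(\Mn) = p$ forced by taking $\tr$ in the M-functional defining equation \eqref{eq:Mfun} with $u = u_{\mbox{\tiny Tyl}}$, this pins down $\Mn = p \M/\tr(\M) = \bo V$ and $\sigma = p/\tr(\M)$, so the MSE minimization of \autoref{th:beta0} is indeed carried out against the shape matrix $\bo V$.

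Next I would compute $\psi_1$ by direct substitution into \eqref{eq:psi1} and \eqref{eq:psi1_c}. Since $\psi(r^2/\sigma) = p$ is deterministic, the expectation is trivial: in the real case $\psi_1 = p^2/[p(p+2)] = p/(p+2)$ and in the complex case $\psi_1 = p^2/[p(p+1)] = p/(p+1)$. The existence of the expectation is automatic, so no regularity condition on $g$ is needed. Note in particular that the factor $\psi_1(1-1/p)$ appearing in the denominators of \eqref{eq:beta0ell} and \eqref{eq:beta0ell_c} collapses neatly to $(p-1)/(p+2)$ in the real case and $(p-1)/(p+1)$ in the complex case.

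Finally I would plug these values into \eqref{eq:beta0ell} and \eqref{eq:beta0ell_c} respectively, which immediately yields the two closed-form expressions in the statement. There is no genuine obstacle here: the only thing one has to be mildly careful about is verifying that the normalization conventions are consistent, namely that using $\Mn = \bo V$ rather than some other scalar multiple of $\M$ is exactly what makes the oracle $\bo C_\be$ comparable with $\bo V$ in Frobenius norm, so that the identity $\arg\min_\be \E\|\bo C_\be - \bo V\|_{\Fr}^2 = \be_o^{\mathrm{app}}$ inherited from \autoref{th:beta0} is the correct reading of the corollary.
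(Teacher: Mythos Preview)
Your proposal is correct and follows essentially the same route as the paper's proof: identify $\Mn = \bo V$ and compute $\psi_1 = p/(p+2)$ (resp.\ $p/(p+1)$) from the fact that $\psi \equiv p$ is constant for Tyler's weight, then substitute into \eqref{eq:beta0ell} and \eqref{eq:beta0ell_c}. One minor caveat: the normalization $\tr(\Mn)=p$ is not literally \emph{forced} by taking the trace of \eqref{eq:Mfun} (Tyler's M-functional is scale-ambiguous, as your own observation that \eqref{eq:sca} holds for every $\sigma$ already shows), but is the convention adopted in the paper immediately before the corollary, so your conclusion $\Mn=\bo V$ is correct in context.
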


\begin{proof}  The real-valued case  follows by noting that  $\psi_1$ in \eqref{eq:psi1}  is equal to 
$\psi_1 = p/(p+2)$ for all random vectors $ \x \sim \mathcal E_p(\bo 0, \M,g)$ regardless of $g$ (i.e., the functional form of the density generator) and  that $\Mn = \bo V$.  Plugging  $\psi_1=p/(p+2)$ into \eqref{eq:beta0ell}  yields  the stated expression. The complex-valued case follows similarly. 
\end{proof} 

Since Tyler's M-estimator verifies $\tr(\hat \M)=p$, the shrinkage estimator in  \eqref{eq:shrinkMest} simplifies to 
\beq \label{eq:shrink_tyl}
\hat \M_\be = \be \hat \M + (1-\be) \bo I, 
\eeq 
where $\hat \M$ is Tyler's M-estimator, i.e., an M-estimator based on weight \eqref{eq:tyl_u}. 
By {\bf RTyl-Ell1} we refer to  \eqref{eq:shrink_tyl},  where the shrinkage parameter is computed as $\be = \beta_o^{\mathrm{app}}(\hat \gamma^{\text{Ell1}})$ with $\be_o^\mathrm{app}$ given by  \autoref{cor:beta_tyl} and  $\hat \gamma^{\text{Ell1}}$ by \eqref{eq:hatgamma_ell1_eq2}.

A related regularized Tyler's estimator was proposed  by  \cite{abramovich2007diagonally} as the limit of the algorithm
\beq \label{eq:CWH} 
\begin{aligned}
\M_{k+1}     &\leftarrow  \be \frac{\pdim}{\ndim} \sum_{i=1}^\ndim \frac{\x_i \x_i^\top}{\x_i^\top \bo V_{k}^{-1} \x_i} + (1-\be) \I \\ 
\bo V_{k+1} &\leftarrow  \pdim \M_{k+1}/\tr(\M_{k+1}),
\end{aligned}
\eeq 
where $\be \in (0,1)$ is a fixed shrinkage parameter. This algorithm represents a diagonally loaded  version of the fixed-point algorithm  given for Tyler's M-estimator.  Uniqueness and convergence of  the recursive algorithm has been later derived in \cite{chen2011robust,sun2014regularized}. 
By {\bf CWH} estimator we now refer to estimator obtained by iterating \eqref{eq:CWH} using same value $\be = \beta_o^{\mathrm{app}}(\hat \gamma^{\text{Ell1}})$ as for RTyl-Ell1.   An interesting question then is how different  is RTyl-Ell1 in its performance from CWH. We explore this by simulation studies later.   This is interesting as the former is simply shrinking the eigenvalues of Tyler's M-estimator towards its grand mean where as the latter does not have an explicit connection to Tyler's M-estimator $\hat \M$ for any $\be \in (0,1)$.  

\subsection{Regularized M-estimator for MVT distribution (RMVT)} \label{subsect:tMLE}

We assume that the data is arising from a MVT distribution $t_\nu(\bo 0,\M)$ but the d.o.f. parameter $\nu$ is unknown and is adaptively estimated  from the data using \autoref{alg:nu} explained below.  Once $\hat \nu$ is found, we use function $u(\cdot) = u_{\mbox{\tiny T}}(\cdot ; \hat \nu)$ to compute the underlying M-estimator $\hat \M$ for the postulated MVT distribution. 

Yet, we need to address the question of how the constant $\psi_1$ is computed.  Due to data adaptive estimation of $\nu$, we can assume that  $\sigma \approx1$  since the scaling factor $\sigma$ equals unity for an MLE of the scatter matrix parameter.
  We use the fact that for MVT distribution (i.e., when $\x \sim t_\nu(\bo 0, \M)$),  the $\psi_1$ parameter is\footnote{Note that the $t$-weight in the complex case is  \cite{ollila2012complex}: $ u_{\mbox{\tiny T}}(t ; \nu) = \frac{2\pdim  + \nu }{\nu +2 t}$. }  
\[
\psi_1   =  \begin{cases} \dfrac{p+\nu}{2+p+\nu}   , & \mbox{real case} \vspace{2pt} \\
 \dfrac{2p+\nu}{2+2p+\nu}   , & \mbox{complex case} 
\end{cases} . 
\]
Hence, a natural estimate is $\hat \psi_1 = (p + \hat \nu)/(2+p + \hat \nu)$ in the real case.  An estimate of $\hat \psi_1$ is constructed similarly in the complex case. We use acronym {\bf RMVT-Ell1} to refer to shrinkage M-estimator $\hat \M_\be$ that uses $u(\cdot; \hat \nu) $ with    shrinkage parameter  calculated by  $\be = \beta_o^{\mathrm{app}}(\hat \gamma^{\text{Ell1}}, \hat \psi_{1})$.  {\bf RMVT-Ell2} is constructed similarly, but now Ell2-estimator of sphericity $\hat \gamma^{\text{Ell2}}$ is used.

Next we discuss our approach for estimating $\nu$ from the data. 
Assume  $\x \sim t_{\nu}(\bo 0, \M,g)$ and denote $\eta= \tr(\M)/p$. Then, 
\[
\bo R = \cov(\x)= (\nu/(\nu-2)) \M
\]
and hence $\tr( \bo R)/p= (\nu/(\nu-2)) \tr(\M)/p$.  This means that 
\[
\frac{\nu}{\nu -2} = \frac{ \tr(\bo R)}{\tr(\M)}  = \eta_{\text{ratio}} 
\]
from which we obtain the relation
\beq \label{eq:eta_ratio}
\nu = \frac{2 \eta_{\text{ratio}}}{\eta_{\text{ratio}}-1}.
\eeq 
The above relation holds true in both real and complex cases.  Then given an estimate $\hat \M$ of $\M$,  we may compute an estimate $ \hat \eta_{\text{ratio}}= \tr(\S)/\tr(\hat \M)$ which in turn provides an  estimate $\hat \nu$ via \eqref{eq:eta_ratio}. 
 This  gives rise to an iterative algorithm  to estimate $\nu$ detailed in  \autoref{alg:nu}. In the simulations, the algorithm converged, but already 2 iterations are sufficient to yield accuracy to first decimal; see \autoref{fig:nu} for an illustration.  
The initial estimate  is $\nu_{o}=2/(\max(0,\hat \kappa) +\delta)+ 4$, where $\hat \kappa$ is an estimate of marginal kurtosis explained in  \autoref{subsec:RSCM} (see also \cite{ollila2019optimal} for more details) and $\delta>0$ is a small number. The initial start $\nu_{o}$ is based on the following relationship with elliptical kurtosis parameter,  $\kappa =  2/(\nu-4)$, i.e., $\nu = 2/\kappa + 4$ which holds  true both in real and complex cases. Again the estimate $\hat \nu$ in the complex-valued case is constructed similarly.   Note that also other estimators of $\nu$ are proposed in the literature, for example in \cite{ashurbekova2020optimal}.

\begin{figure}[!t]
\setlength\fwidth{0.57\textwidth}
\centering
%
%
\begin{tikzpicture}
\footnotesize
\begin{axis}[%
width=0.56\fwidth,
height=0.44\fwidth,
scale only axis,
xmin=60,
xmax=320,
xlabel style={font=\color{white!15!black}},
xlabel={\small $n$},
 axis x line*=bottom,
 axis y line*=left,
ymin=5,
ymax=8,
ylabel style={font=\color{white!15!black}},
ylabel={$\nu$},
axis background/.style={fill=white},
xmajorgrids,
ymajorgrids,
legend style={legend columns=1,  at={(1.41,1.02)}, legend cell align=left, align=left, fill=none, draw=none} 
]
\addplot [color=black,  line width=0.7pt, mark size=2.0pt, mark=x, mark options={solid, black}]
  table[row sep=crcr]{%
60	6.7936080506437\\
80	5.91380723619394\\
100	5.65101166568368\\
120	5.48975031009831\\
140	5.42422033666137\\
160	5.37960082721805\\
180	5.33063498025775\\
200	5.28696491315068\\
240	5.24946288444055\\
320	5.2092635371655\\
};
\addlegendentry{$\hat \nu$  ($T_{max}=15$)}

\addplot [color=blue, line width=0.7pt, mark size=2.0pt, mark=o, mark options={solid, blue}] 
  table[row sep=crcr]{%
60	6.81728504910592\\
80	5.94515821456576\\
100	5.67158376717789\\
120	5.50506291607559\\
140	5.43711743577757\\
160	5.39071843292585\\
180	5.3406128593158\\
200	5.29588479566354\\
240	5.25684078650721\\
320	5.21478356032472\\
};
\addlegendentry{$\hat \nu$ ($T_{max}=2$)}

\addplot [color=red, line width=0.7pt, mark size=2.0pt, mark=square, mark options={solid, red}]
  table[row sep=crcr]{%
60	7.64294673526727\\
80	7.2369818368713\\
100	6.94849846776953\\
120	6.70981868782695\\
140	6.61546682196353\\
160	6.521927790159\\
180	6.443460009184\\
200	6.35847863225821\\
240	6.2390777269822\\
320	6.09303030863221\\
};
\addlegendentry{$\nu_0$}

\end{axis}

\begin{axis}[%
width=0.21\fwidth,
height=0.19\fwidth,
at={(0.3\fwidth,0.24\fwidth)},
scale only axis,
xmin=99,
xmax=101,
xtick={ 99, 100, 101},
ymin=5.64,
ymax=5.69,
 yticklabel style={/pgf/number format/.cd,fixed,precision=3},
axis background/.style={fill=white},
legend style={legend cell align=left, align=left, draw=white!15!black}
]
\addplot [color=black,line width=0.9pt, mark size=2.0pt, mark=x, mark options={solid, black}]
  table[row sep=crcr]{%
80	5.91380723619394\\
100	5.65101166568368\\
120	5.48975031009831\\
};

\addplot [color=blue, dashdotted, line width=0.9pt, mark size=2.0pt, mark=o, mark options={solid, blue}]
  table[row sep=crcr]{%
80	5.94515821456576\\
100	5.67158376717789\\
120	5.50506291607559\\
};

\addplot [color=red, dashed, line width=0.9pt, mark size=2.0pt, mark=square, mark options={solid, red}]
  table[row sep=crcr]{%
80	7.2369818368713\\
100	6.94849846776953\\
120	6.70981868782695\\
};

\end{axis}
\end{tikzpicture}%
\caption{Average $\hat \nu$ by running the \autoref{alg:nu} with different choices of $T_{max}$. Also shown is the initial estimate $\nu_0$. 
The samples are generated from a $p$-variate  $t$-distribution with $\nu=5$ d.o.f., where $\M$ follows the same AR(1) covariance matrix structure 
explained in the simulation set-up of  \autoref{sec:simul}; $\varrho=0.6$ and $p=40$,    As can be noted, $\hat \nu$ converge to  $\nu=5$ as $n$ increases, albeit the convergence is a bit slow.} \label{fig:nu}
\end{figure}
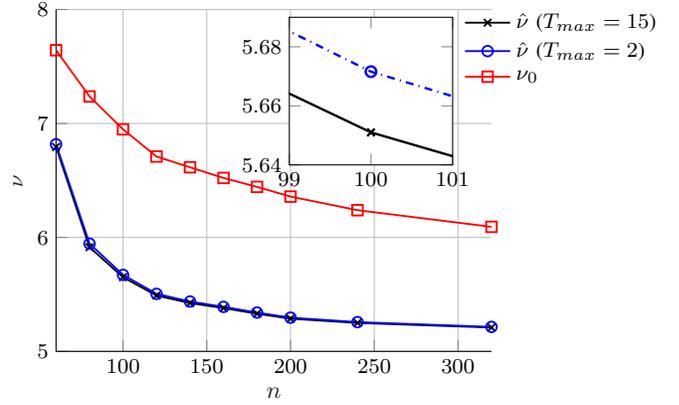

\begin{algorithm}[t]
	\caption{Automatic data-adaptive computation of the d.o.f. parameter $\nu$} \label{alg:nu}
	\DontPrintSemicolon
	\SetKwInOut{Input}{Input}\SetKwInOut{Output}{Output}
	\SetKwFunction{Support}{InitSupport}
	\SetKwInOut{Init}{Initialize}
	\SetAlgoNlRelativeSize{-1}	
	\SetNlSkip{1em}
	\SetInd{0.5em}{0.5em}
	\Input{data matrix $\X$ of size $n \times p$, maximum number of iterations $T_{max}$}			
	\Init{ 
	 Compute $\nu_0 = 2/\max(0,\hat \kappa) + 4$, 
	where $\hat \kappa$ is an estimate of $\kappa$ explained in the text.}			
	
\For{$ \mathsf{t}=0,1, \ldots, T_{max}$}{ 
	
		 Set  $\hat \M_{\mathsf{t}} = \hat \M$, where $\hat \M$ denotes the $t$-MLE based on current estimate  of d.o.f. parameter $\nu = \nu_{\mathsf{t}}$, 	 i.e., solving  the M-estimating equation 
		 \[
		 \hat \M = \frac{1}{n} \sum_{i=1}^n u(\x_i^\top \hat \M^{-1} \x_i) \x_i \x_i^\top
		 \] 
		 with  $ u(\cdot) = u_{\mbox{\tiny T}}(\cdot; \nu_{\mathsf{t}})$ is the MVT-weight function. 
		 
		 Update the ratio $ \hat \eta_{\mathsf{t}}  = \dfrac{\tr(\bo S)}{\tr(\hat \M_{\mathsf{t}})}$. 	
		
		 Upate the d.o.f. parameter  $\nu_{\mathsf{t+1}}  = \dfrac{2 \eta_{\mathsf{t}}}{\eta_{\mathsf{t}}-1}$.
		 
		 \If{ $ | \nu_{\mathsf{t+1}}  -  \nu_{\mathsf{t}} |/ \nu_{\mathsf{t}} < 0.01$ }{break \; }
								 		
	}
	\Output{$\hat \nu = \nu_{\mathsf{t+1}}$} 
\end{algorithm}


\section{Simulation studies}  \label{sec:simul}

In the simulation study, we generate samples from real ES distributions with a scatter  matrix  $\M$ following an AR(1) structure, $(\M)_{ij} =  \tau \varrho^{|i-j|}$, where $\varrho \in (0,1)$ and  scale  parameter  $\tau = \tr(\M)/\pdim =10$.  When $\varrho
\downarrow 0$, then $\M$ is close to an identity matrix scaled by $\tau$, and when $\varrho
\uparrow 1$, $\M$ tends to a singular matrix of rank 1. 
The results are reported for the proposed shrinkage M-estimators using shrinkage parameter estimates based on  Ell1-estimator of sphericity. However,  for notational convenience,  we drop the suffix -Ell1 from the proposed estimators. Thus the proposed estimators, described in \autoref{sec:special_cases} are referred to as  RSCM, RMVT, RHub, and RTyl.  Furthermore, acronyms LW, CWH and RBLW are used to refer to estimators proposed in  \cite{ledoit2004well} (see also \autoref{subsec:RSCM}), 
\cite{chen2011robust} (see also \autoref{subsec:RTyl} and \eqref{eq:CWH}) and  \cite{chen2010shrinkage}, respectively. {\bf RBLW} is the Rao-Blackwellized LW estimator, but unlike LW estimator, it assumes that the data distribution is Gaussian.  

We also compare to RSCM estimator $\bo S_\be$ in \eqref{eq:LW}, but now the shrinkage parameter $\beta$ is chosen via $k$-fold cross-validation  (CV), where as cross-validation fit  we use $\| \bo S_{\be,\text{tr}} -  \bo S_{\text{val}} \|_{\Fr}$,  where $\bo S_{\text{val}}$ is the SCM based on the validation set (data fold that was left out) and $\bo S_{\be,\text{tr}}$ is the RSCM computed on the training set (data based on remaining folds) for a given $\be$. 
 As a grid for $\be$ for the CV method we use a uniform grid in $[0,1]$ with $0.05$ increments and $5$-fold cross-validation. We call this method as {\bf RSCM-CV} or simply {\bf CV}. All simulation results in this section are averaged over 2000 Monte-Carlo trials. Since $n>p$ is assumed for all estimators expect for RSCM, we do not consider the case low sample regime, $n \leq p$, in our simulation studies. Furthermore, we adopt the MSE (squared Frobenius norm) as our performance metric as it  is used in deriving the optimal shrinkage parameters in this paper. It is important to keep in mind, however, that in the low sample regime and for different performance metrics, the performance differences between estimators can often be noticeable and even quite different than in the $n>p$ regime that is considered here; See e.g., \cite{chen2010shrinkage,coluccia2015regularized,ollila2019optimal} for numerical illustrations and \cite{smith2005covariance,breloy2019intrinsic} for different distances between covariances that could be used instead of the MSE metric.

\subsection{Gaussian data}  

\begin{figure}[!t]
\setlength\fwidth{0.57\textwidth}
\centering
%
%
\definecolor{mycolor1}{rgb}{0.14200,0.67200,0.30300}%
\definecolor{mycolor2}{rgb}{0.54200,0.27200,0.60300}%
\begin{tikzpicture}
\footnotesize
\begin{axis}[%
width=0.56\fwidth,
height=0.44\fwidth,
scale only axis,
xmin=60,
xmax=280,
xlabel style={font=\color{white!15!black}},
xlabel={$n$},
ymin=0.0636940145535204,
ymax=0.210697185427197,
ylabel style={font=\color{white!15!black}},
ylabel={\small $\| \hat \M- \M \|_{\mathrm{F}}^2/\| \M \|_{\mathrm{F}}^2$},
axis background/.style={fill=white},
xmajorgrids,
ymajorgrids,
legend style={legend columns=1,  at={(1.3,1.02)}, legend cell align=left, align=left, fill=none, draw=none} 
]

\addplot [color=red,  line width=0.7pt, mark size=2.0pt,  mark=triangle, mark options={solid, rotate=180, red}]
  table[row sep=crcr]{%
60	0.207725776439129\\
80	0.172499093853904\\
100	0.147918479258892\\
120	0.128509430218218\\
140	0.114191081731507\\
160	0.102227831828467\\
180	0.0933925963346766\\
200	0.0853894498590014\\
220	0.07858322489493\\
240	0.0729517714716043\\
260	0.0678604297808869\\
280	0.0638871916987028\\
};
\addlegendentry{RMVT}

\addplot [color=blue,  line width=0.7pt, mark size=2.0pt, , mark=o, mark options={solid, blue}]
  table[row sep=crcr]{%
60	0.208505050168925\\
80	0.173359330480369\\
100	0.148784739876927\\
120	0.12926780159694\\
140	0.11493056177474\\
160	0.102872883767607\\
180	0.0940396742021025\\
200	0.0859703978851588\\
220	0.0791624793784054\\
240	0.0734912923970398\\
260	0.0683676614262756\\
280	0.0643588105785364\\
};
\addlegendentry{RHub}

\addplot [color=black, line width=0.7pt, mark size=2.2pt, mark=x, mark options={solid, black}]
  table[row sep=crcr]{%
60	0.207876969030952\\
80	0.172564833510881\\
100	0.147942398075557\\
120	0.128498769237209\\
140	0.114171035430119\\
160	0.102208308553464\\
180	0.09334466875506\\
200	0.0853508252168992\\
220	0.0785374986395278\\
240	0.072911200817477\\
260	0.0678120381336769\\
280	0.0638484943097481\\
};
\addlegendentry{RSCM}

\addplot [color=mycolor1,  line width=0.7pt, mark size=2.0pt, mark=square, mark options={solid, mycolor1}]
  table[row sep=crcr]{%
60	0.208901166460334\\
80	0.173145104162067\\
100	0.148275163484884\\
120	0.128855400954153\\
140	0.114401209118349\\
160	0.102385082337128\\
180	0.0934845308622866\\
200	0.0854077001571741\\
220	0.078615546768835\\
240	0.0729629780778958\\
260	0.0678833752418767\\
280	0.0639114632476581\\
};
\addlegendentry{LW}

\addplot [color=mycolor2, line width=0.7pt, mark size=2.2pt, mark=triangle, mark options={solid, mycolor2}]
  table[row sep=crcr]{%
60	0.206983465636785\\
80	0.171741319437712\\
100	0.147245933587506\\
120	0.128037163740915\\
140	0.113709929761856\\
160	0.101878999300753\\
180	0.0930649930088065\\
200	0.0850392934114407\\
220	0.0783011899007216\\
240	0.0727174598446281\\
260	0.0676620147933561\\
280	0.0636940145535204\\
};
\addlegendentry{RBLW}

\addplot [color=gray, line width=0.7pt, mark size=2.2pt, mark=asterisk, mark options={solid, gray}]
  table[row sep=crcr]{%
60	0.210697185427197\\
80	0.174716243012767\\
100	0.149658548978391\\
120	0.129913797947737\\
140	0.115346294002788\\
160	0.103161772855069\\
180	0.0941813573536172\\
200	0.0861046993625721\\
220	0.07917884910182\\
240	0.0734612544577657\\
260	0.0683082194552209\\
280	0.0642673935376015\\
};
\addlegendentry{CV}

\end{axis}

\begin{axis}[%
width=0.25\fwidth,
height=0.23\fwidth,
at={(0.27\fwidth,0.18\fwidth)},
scale only axis,
xmin=119,
xmax=121,
xtick={119, 120, 121},
ymin=0.1273,
ymax=0.1305,
ytick={0.128, 0.129,  0.13},
 yticklabel style={/pgf/number format/.cd,fixed,precision=3},
axis background/.style={fill=white},
legend style={legend cell align=left, align=left, draw=white!15!black}
]
\addplot [color=red,  line width=0.9pt, mark size=2.5pt, mark=triangle, mark options={solid, rotate=180, red}]
  table[row sep=crcr]{%
100	0.147918479258892\\
120	0.128509430218218\\
140	0.114191081731507\\
};

\addplot [color=blue, line width=0.7pt, mark size=2.3pt,  mark=o, mark options={solid, blue}]
  table[row sep=crcr]{%
100	0.148784739876927\\
120	0.12926780159694\\
140	0.11493056177474\\
};

\addplot [color=black,  line width=0.7pt, mark size=2.3pt, mark=x, mark options={solid, black}]
  table[row sep=crcr]{%
100	0.147942398075557\\
120	0.128498769237209\\
140	0.114171035430119\\
};

\addplot [color=mycolor1, line width=0.7pt, mark size=2.0pt, mark=square, mark options={solid, mycolor1}]
  table[row sep=crcr]{%
100	0.148275163484884\\
120	0.128855400954153\\
140	0.114401209118349\\
};

\addplot [color=mycolor2,  line width=0.7pt, mark size=2.3pt, mark=triangle, mark options={solid, mycolor2}]
  table[row sep=crcr]{%
100	0.147245933587506\\
120	0.128037163740915\\
140	0.113709929761856\\
};

\addplot [color=gray, line width=0.7pt, mark size=2.5pt, mark=asterisk, mark options={solid, gray}]
  table[row sep=crcr]{%
100	0.149658548978391\\
120	0.129913797947737\\
140	0.115346294002788\\
};

\end{axis}

\end{tikzpicture}%
\vspace{-0.4cm}
\caption{NMSE as a function of $n$ when samples are drawn from a MVN distribution $\mathcal N_p(\bo 0, \M)$ with an AR(1)  structure; $\varrho=0.6$ and $p=40$.} \label{fig:AR1}
\end{figure}
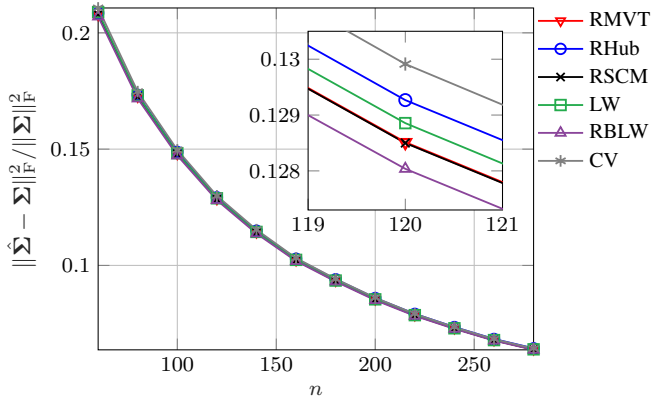

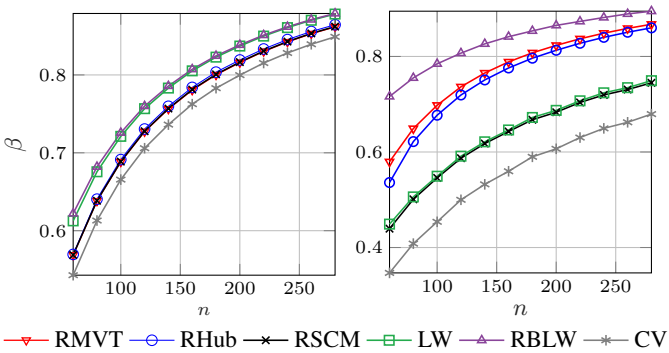
\begin{figure}[!t]
\setlength\fwidth{0.40\textwidth}
%
%
\definecolor{mycolor1}{rgb}{0.14200,0.67200,0.30300}%
\definecolor{mycolor2}{rgb}{0.54200,0.27200,0.60300}%
\begin{tikzpicture}
\scriptsize
\begin{axis}[%
width=0.48\fwidth,
height=0.48\fwidth,
at={(0\fwidth,0\fwidth)},
scale only axis,
xmin=60,
xmax=280,
xlabel style={font=\color{white!15!black}},
xlabel={$n$},
ymin=0.542874999999999,
ymax=0.878893089132901,
ylabel style={font=\color{white!15!black}},
ylabel={{\small $\beta$}},
axis background/.style={fill=white},
xmajorgrids,
ymajorgrids,
legend style={legend columns=-1, font=\small, anchor=north west, legend cell align=right, align=right, fill=none, draw=none, overlay, at={(-0.3,-.17)}}	
]
\addplot [color=red,  line width=0.5pt, mark size=1.8pt,  mark=triangle, mark options={solid, rotate=180, red}]
  table[row sep=crcr]{%
60	0.568753506192766\\
80	0.638193649091681\\
100	0.688394036943731\\
120	0.727224379095095\\
140	0.756216269664869\\
160	0.780536828020897\\
180	0.800048599310738\\
200	0.815857836310624\\
220	0.830102172736448\\
240	0.842320875597092\\
260	0.852833778049222\\
280	0.861643552779302\\
};
\addlegendentry{RMVT}

\addplot [color=blue, mark=o, mark options={solid, blue}]
  table[row sep=crcr]{%
60	0.569553959326638\\
80	0.640489581718601\\
100	0.691294178203833\\
120	0.730534355572605\\
140	0.759712374325954\\
160	0.783970233705459\\
180	0.803575336960272\\
200	0.819177206348133\\
220	0.833291876918929\\
240	0.845379340412892\\
260	0.855756233611784\\
280	0.864434359921432\\
};
\addlegendentry{RHub}

\addplot [color=black, line width=0.6pt, mark size=2.0pt,  mark=x, mark options={solid, black}]
  table[row sep=crcr]{%
60	0.568809593325086\\
80	0.638451263573277\\
100	0.688729264679969\\
120	0.727763714581374\\
140	0.756840386115957\\
160	0.781206432562271\\
180	0.800772935646986\\
200	0.816468000366124\\
220	0.830648198487275\\
240	0.842849883092528\\
260	0.853326885159567\\
280	0.862096572211429\\
};
\addlegendentry{RSCM}

\addplot [color=mycolor1, line width=0.6pt, mark size=1.8pt,  mark=square, mark options={solid, mycolor1}]
  table[row sep=crcr]{%
60	0.612599952736911\\
80	0.675654585804067\\
100	0.721132119800316\\
120	0.756680003722132\\
140	0.783116782441253\\
160	0.805238668579996\\
180	0.82283529720188\\
200	0.837078476958378\\
220	0.849859589524714\\
240	0.860744239211582\\
260	0.870187087860488\\
280	0.87814316231541\\
};
\addlegendentry{LW}

\addplot [color=mycolor2, line width=0.6pt, mark size=1.8pt,mark=triangle, mark options={solid, mycolor2}]
  table[row sep=crcr]{%
60	0.62207019548342\\
80	0.682056233830439\\
100	0.725732657579359\\
120	0.759969510850843\\
140	0.785719418093473\\
160	0.807249967019282\\
180	0.824488457019465\\
200	0.838525626431785\\
220	0.851085929463499\\
240	0.861746134575976\\
260	0.8710239713687\\
280	0.878893089132901\\
};
\addlegendentry{RBLW}

\addplot [color=gray, line width=0.6pt, mark size=2.0pt, mark=asterisk, mark options={solid, gray}]
  table[row sep=crcr]{%
60	0.542874999999999\\
80	0.613050000000003\\
100	0.665599999999997\\
120	0.705900000000002\\
140	0.736224999999997\\
160	0.762475000000005\\
180	0.783000000000002\\
200	0.799749999999992\\
220	0.815275000000004\\
240	0.82817500000001\\
260	0.839374999999996\\
280	0.848924999999983\\
};
\addlegendentry{CV}

\end{axis}
\end{tikzpicture}%
%
%
\definecolor{mycolor1}{rgb}{0.14200,0.67200,0.30300}%
\definecolor{mycolor2}{rgb}{0.54200,0.27200,0.60300}%
\begin{tikzpicture}
\scriptsize
\begin{axis}[%
width=0.48\fwidth,
height=0.48\fwidth,
scale only axis,
xmin=60,
xmax=280,
xlabel style={font=\color{white!15!black}},
xlabel={{\small $n$}},
ymin=0.34685,
ymax=0.894759044941558,
ylabel style={font=\color{white!15!black}},
axis background/.style={fill=white},
xmajorgrids,
ymajorgrids,
legend style={legend columns=1, legend cell align=left, align=left, fill=none, draw=none}
]
\addplot [color=red, line width=0.6pt, mark size=1.8pt,  mark=triangle, mark options={solid, rotate=180, red}]
  table[row sep=crcr]{%
60	0.579568646917573\\
80	0.649082236801492\\
100	0.697848690131362\\
120	0.73647474993256\\
140	0.765368734264826\\
160	0.788613765175069\\
180	0.807663308712089\\
200	0.823199963357187\\
220	0.837004794133762\\
240	0.848711913206341\\
260	0.858750862655772\\
280	0.867391904429414\\
};

\addplot [color=blue, line width=0.6pt, mark size=1.8pt,   mark=o, mark options={solid, blue}] 
  table[row sep=crcr]{%
60	0.536054105795048\\
80	0.621780739892285\\
100	0.677017793528086\\
120	0.719327641130192\\
140	0.750852063607823\\
160	0.775901398175192\\
180	0.796381564293534\\
200	0.81292219727506\\
220	0.827585199402952\\
240	0.840029175709647\\
260	0.850699160533814\\
280	0.859878118048861\\
};

\addplot [color=black, line width=0.6pt, mark size=2.0pt,    mark=x, mark options={solid, black}]
  table[row sep=crcr]{%
60	0.439453187997791\\
80	0.500979057457806\\
100	0.545772246674932\\
120	0.586982404642373\\
140	0.617653014219047\\
160	0.642921953927016\\
180	0.667481512110169\\
200	0.683227495615254\\
220	0.703713512371713\\
240	0.719407310961488\\
260	0.730687846031397\\
280	0.744811913044292\\
};

\addplot [color=mycolor1, line width=0.6pt, mark size=1.8pt,  mark=square, mark options={solid, mycolor1}] 
  table[row sep=crcr]{%
60	0.449011378930349\\
80	0.505972594943444\\
100	0.549284619328713\\
120	0.591273587712397\\
140	0.621313790346825\\
160	0.646096168822703\\
180	0.672294730015853\\
200	0.687145138976202\\
220	0.707170847544069\\
240	0.723935781946755\\
260	0.73469389883358\\
280	0.749313048889081\\
};

\addplot [color=mycolor2,  line width=0.6pt, mark size=1.8pt, mark=triangle, mark options={solid, mycolor2}] 
  table[row sep=crcr]{%
60	0.716171793681385\\
80	0.755145419469561\\
100	0.784568577112001\\
120	0.806931812338108\\
140	0.826346020577112\\
160	0.841344004302816\\
180	0.853522369049999\\
200	0.865048550839738\\
220	0.873358515976304\\
240	0.881548241059268\\
260	0.889054460737973\\
280	0.894759044941558\\
};

\addplot [color=gray, line width=0.6pt, mark size=2.0pt, mark=asterisk, mark options={solid, gray}] 
  table[row sep=crcr]{%
60	0.34685\\
80	0.408175\\
100	0.453824999999999\\
120	0.4998875\\
140	0.532262500000001\\
160	0.559662499999999\\
180	0.589862499999999\\
200	0.60665\\
220	0.630237500000001\\
240	0.6494625\\
260	0.662175\\
280	0.679337499999999\\
};

\end{axis}
\end{tikzpicture}%
\vspace{0.4cm}
\caption{Shrinkage parameter $\beta$  as a function of $n$  when samples are drawn from a  MVN distribution (left panel) and a  $t$-distribution with $ \nu=5$  d.o.f. (right panel), where $\M$ has  an  AR(1) structure; $\varrho=0.6$ and $p=40$.} \label{fig:AR1_beta}
\end{figure}

The data is generated from MVN distribution $\mathcal N_p(\bo 0, \M)$,  where $\M$ has an  AR(1) covariance matrix structure with  $\varrho=0.6$. 
The dimension is $\pdim = 40$ and $\ndim$ varies from 60 to 280.  Value $q=0.7$ determining the threshold $c$ is used  in \eqref{eq:csq} for Huber's weight. 
Since Huber's M-estimator is scaled to be consistent to the covariance matrix for Gaussian samples,  the underlying population parameter $\M_0$  coincides with the covariance matrix $\M$ in this case. We also scaled the MVT-weight $u_{\text{T}}(t; \nu)$ so that it is consistent to $\M$ for Gaussian data. 
\autoref{fig:AR1}  compares the normalized MSE (NMSE) $\| \hat \M_\be - \M \|_{\Fr}^2/\| \M \|_{\Fr}^2$ of different estimators w.r.t. increasing sample length $n$. 
 It can be noted that all estimators provide essentially equally good estimator of the covariance matrix $\M$ for Gaussian data; RSCM and RMVT are performing equally well, largely due to the effect of data-adaptive estimation of d.o.f. parameter $\nu$.  It should be noted that  their performance difference to LW or RHub estimators are still marginal and differences can be spotted only by zooming in as in the sub-plot of  \autoref{fig:AR1}. As expected, RBLW estimator has a slight advantage over the other estimators in this case. The left panel of \autoref{fig:AR1_beta} shows the (average) shrinkage parameter $\beta$ as a function of $n$. As can be noted, the average shrinkage parameter of the proposed RSCM estimator can be seen to be roughly an average of CV and LW shrinkage parameters.

\subsection{Heavy-tailed data} 

Next we computed the NMSE curves when the data is generated from a heavy-tailed $t$-distribution with $\nu= 5$ and $ \nu=3$ d.o.f.   Note that NMSE of each estimator is now compared against the underlying population parameter $\M_0$ of each M-estimator.  \autoref{fig:AR1_t}  displays the results. RBLW  had a very poor performance which is due to its strict assumption of Gaussianity.  It can be noted that CV method performs similarly, but slightly worse, than RSCM or LW. 
This can be partially attested to poor robustness properties of cross-validation.
In the case of $\nu=3$ d.o.f., also the non-robust RSCM and LW provided large NMSE and thus all non-robust estimators are not visible in the right panel of  \autoref{fig:AR1_t}. This was expected since $t$-distribution with $\nu=3$ d.o.f. is very heavy-tailed with non-finite kurtosis. 
As can be noted, the proposed robust RHub and RMVT  estimators  provide significantly improved performance. We can also notice that RMVT estimator that adaptively estimates the d.o.f. $ \nu$ from the data is able to outperform the regularized Huber's estimator (RHub).

The right panel of \autoref{fig:AR1_beta} depicts the (average) shrinkage parameter $\beta$ as a function of $n$ in the case that samples are drawn from a $t$-distribution with $ \nu=5$ d.o.f.  As can be noted the robust shrinkage estimators (RHub and RMVT) use larger shrinkage parameter value $\beta$ than the non-robust RSCM and LW estimators. Compared to RSCM and LW, the RBLW (resp. CV) is  seen to overestimate (resp. underestimate) the shrinkage parameter as it obtains much larger (resp. smaller) values.

Next we investigate how the estimators are able to estimate the shape matrix, i.e., the covariance matrix up to a scale. 
\autoref{fig:AR1shape} displays the  NMSE, $\| \hat{\bo V} - \bo V \|_{\Fr}^2/\| \bo V \|_{\Fr}^2$,   of different shrinkage shape matrix estimators, defined as $ \hat{\bo V} = p \hat \M_\be/\tr(\hat \M_\be)$ when samples are generated  from a $t$-distribution with $ \nu=5$ d.o.f.  
Note that such normalization is not necessary for CWH or RTyl since they verify $\tr(\hat{\bo \M})=p$ in the first place. 
\autoref{fig:AR1shape} illustrate both the case when correlation parameter $\varrho$ of the AR$(1)$ scatter matrix parameter $\M$ is fixed while 
$n$ varies and the case that $n$ is fixed while $\varrho$ varies.
As can be seen from the top panel of \autoref{fig:AR1shape}, all robust shape estimators are performing well and very similarly.  In fact, performance of RMVT and CWH is essentially the same. We can also observe that the two different approaches for shrinking Tyler's M-estimator, so CWH and the proposed RTyl  are very similar. We can note from the bottom panels of \autoref{fig:AR1shape}  that when $\varrho \approx 0$ (so $\M$ is close to a scaled identity matrix) all estimators perform similarly. This is because all estimators are  shrunk heavily towards the scaled identity matrix (namely, $\beta \approx 0$ for all estimators). Similarly, when $\varrho \approx 1$ (so $\M$ is close to a singular matrix of rank $1$), all estimators have a rather similar performance. This is because  the true scatter matrix $\M$ is poorly conditioned ($\mathrm{cond}(\M) \approx7000$) and  all estimators share similar difficulties of capturing the subspace structure  due to limited training data and no {\it a priori} information about such structure. Indeed biggest differences between estimators are observed when $\M$ has no particular structure,  i.e.,  $\varrho$ in the range $[0.4, 0.7]$.

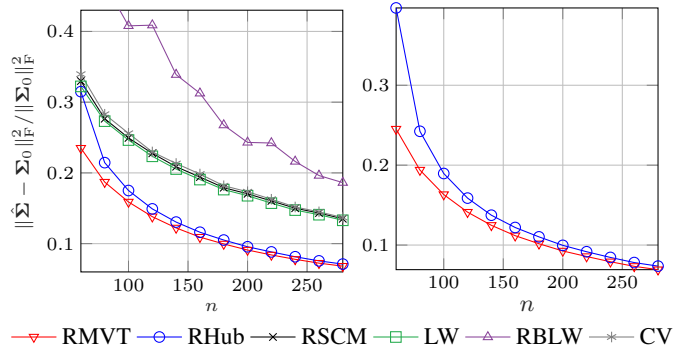
\begin{figure}[!t]
\setlength\fwidth{0.40\textwidth}
%
%
\definecolor{mycolor1}{rgb}{0.14200,0.67200,0.30300}%
\definecolor{mycolor2}{rgb}{0.54200,0.27200,0.60300}%
\begin{tikzpicture}
\scriptsize
\begin{axis}[%
width=0.48\fwidth,
height=0.48\fwidth,
at={(0\fwidth,0\fwidth)},
scale only axis,
xmin=60,
xmax=280,
xlabel style={font=\color{white!15!black}},
xlabel={$n$},
ymin=0.06,
ymax=0.43,
ylabel style={font=\color{white!15!black}},
ylabel={{$\| \hat \M - \M_0 \|_{\mathrm{F}}^2/\| \M_0 \|_{\mathrm{F}}^2$}},
axis background/.style={fill=white},
xmajorgrids,
ymajorgrids,
legend style={legend columns=-1, font=\small, anchor=north west, legend cell align=right, align=right, fill=none, draw=none, overlay, at={(-0.3,-.17)}}	
]
\addplot [color=red, mark=triangle, mark options={solid, rotate=180, red}] 
  table[row sep=crcr]{%
60	0.234989351129897\\
80	0.187021352255102\\
100	0.158904677367105\\
120	0.138121165288443\\
140	0.122201215481695\\
160	0.109208278611978\\
180	0.099556858761056\\
200	0.0909665758652431\\
220	0.0840147707984626\\
240	0.0778155750383841\\
260	0.0724191720284312\\
280	0.0681617236702734\\
};
\addlegendentry{RMVT}

\addplot [color=blue,  mark=o, mark options={solid, blue}] 
  table[row sep=crcr]{%
60	0.314915514116767\\
80	0.214502523605035\\
100	0.175092133660212\\
120	0.14912473262816\\
140	0.130474152415879\\
160	0.116118529497915\\
180	0.105070390806924\\
200	0.0955635902066864\\
220	0.0882775592118987\\
240	0.0814540108405928\\
260	0.0756267319318625\\
280	0.0710861983838439\\
};
\addlegendentry{RHub}

\addplot [color=black,  mark=x, mark options={solid, black}]
  table[row sep=crcr]{%
60	0.329836216121677\\
80	0.276606330641559\\
100	0.249460446788161\\
120	0.2276754216258\\
140	0.208758446976154\\
160	0.194153624379895\\
180	0.179091285335981\\
200	0.170470252280339\\
220	0.160371652069291\\
240	0.150079571096648\\
260	0.143244827865368\\
280	0.135158330543588\\
};
\addlegendentry{RSCM}

\addplot [color=mycolor1, mark=square, mark options={solid, mycolor1}] 
  table[row sep=crcr]{%
60	0.322300321805507\\
80	0.273207725599356\\
100	0.246444105175178\\
120	0.223465968856429\\
140	0.205456102983968\\
160	0.190572944377991\\
180	0.176408435632607\\
200	0.167742172094953\\
220	0.157238554290139\\
240	0.147428154799689\\
260	0.140914069361073\\
280	0.133004103826804\\
};
\addlegendentry{LW}

\addplot [color=mycolor2, mark=triangle, mark options={solid, mycolor2}] 
  table[row sep=crcr]{%
60	0.961995438746567\\
80	0.468000893055517\\
100	0.407927972931537\\
120	0.408873889061668\\
140	0.338903553693445\\
160	0.312542116664462\\
180	0.267597099126292\\
200	0.243127794254307\\
220	0.242240089676987\\
240	0.216167275357581\\
260	0.196291501731024\\
280	0.186265509490257\\
};
\addlegendentry{RBLW}

\addplot [color=gray,  mark=asterisk, mark options={solid, gray}] 
  table[row sep=crcr]{%
60	0.339537646492296\\
80	0.282937410871142\\
100	0.255998102336585\\
120	0.229653897385105\\
140	0.2133126690287\\
160	0.198387901829072\\
180	0.181842771898667\\
200	0.173369615616503\\
220	0.162770684763515\\
240	0.151812351029613\\
260	0.144953337639582\\
280	0.136924285748437\\
};
\addlegendentry{CV}

\end{axis}

\end{tikzpicture}%
%
%
\begin{tikzpicture} 
\scriptsize
\begin{axis}[%
width=0.48\fwidth,
height=0.48\fwidth,
scale only axis,
xmin=60,
xmax=280,
xlabel style={font=\color{white!15!black}},
xlabel={{\small $n$}},
ymin=0.0692150515242403,
ymax=0.396758124125362,
axis background/.style={fill=white},
xmajorgrids,
ymajorgrids,
legend style={legend columns=1, legend cell align=left, align=left, fill=none, draw=none}
]
\addplot [color=red, mark=triangle, mark options={solid, rotate=180, red}] 
  table[row sep=crcr]{%
60	0.245166813816713\\
80	0.193902388442634\\
100	0.163220901296709\\
120	0.141198815242516\\
140	0.124716934329047\\
160	0.111731472949592\\
180	0.101659609495418\\
200	0.0926143349381351\\
220	0.0856737136133183\\
240	0.0791200919866976\\
260	0.0733594521211031\\
280	0.0692150515242403\\
};

\addplot [color=blue,  mark=o, mark options={solid, blue}] 
  table[row sep=crcr]{%
60	0.396758124125362\\
80	0.242386059471335\\
100	0.189560763041202\\
120	0.158766123307292\\
140	0.137365609628373\\
160	0.121977958349076\\
180	0.110239338056141\\
200	0.0996228886467904\\
220	0.0916335152263573\\
240	0.0845104394308717\\
260	0.0780973965453987\\
280	0.0736146543500461\\
};

\end{axis}
\end{tikzpicture}%
\vspace{0.4cm}
\caption{NMSE  as a  function of $n$  when samples are drawn from a $p$-variate $t$-distribution with $\nu=5$ (left panel) and $\nu=3$ (right panel) d.o.f. The scatter matrix follows an AR(1) structure; $\varrho=0.6$ and $p=40$.} \label{fig:AR1_t}
\end{figure}

\begin{figure}[!t]
\setlength\fwidth{0.58\textwidth}
\centering
\subfloat[$\varrho=0.6$]{
%
%
\definecolor{mycolor1}{rgb}{0.14200,0.67200,0.30300}%
\definecolor{mycolor2}{rgb}{0.54200,0.27200,0.60300}%
\definecolor{mycolor3}{rgb}{0.40000,0.40000,0.00000}%
\begin{tikzpicture}

\begin{axis}[%
width=0.56\fwidth,
height=0.44\fwidth,
scale only axis,
xmin=60,
xmax=280,
xlabel style={font=\color{white!15!black}},
xlabel={\small $n$},
ymin=0.0660134542544466,
ymax=0.440950550589464,
ylabel style={font=\color{white!15!black}},
ylabel={\small $\| \hat{ \bo V}- \bo V \|_{\mathrm{F}}^2/\| \bo V \|_{\mathrm{F}}^2$},
axis background/.style={fill=white},
xmajorgrids,
ymajorgrids,
legend style={legend columns=4, font=\small, anchor=north west, legend cell align=left, align=right, fill=none, draw=none,  at={(-0.08,1.19)}}	
]
\addplot [color=red, line width=0.7pt, mark size=2.0pt, mark=triangle, mark options={solid, rotate=180, red}] 
  table[row sep=crcr]{%
60	0.218568297342243\\
80	0.178764986455592\\
100	0.152725405216209\\
120	0.132889035187492\\
140	0.117885864361217\\
160	0.105627699046592\\
180	0.0962945028035744\\
200	0.0881443328799346\\
220	0.0812107042229612\\
240	0.0753549710826732\\
260	0.0701605928004257\\
280	0.0660134542544466\\
};
\addlegendentry{RMVT}

\addplot [color=blue,  line width=0.7pt,mark size=2.0pt, mark=o, mark options={solid, blue}] 
  table[row sep=crcr]{%
60	0.228775463413582\\
80	0.185124836212433\\
100	0.157660280897378\\
120	0.136989743999578\\
140	0.12154145817122\\
160	0.108851260824072\\
180	0.0991367863231486\\
200	0.0907704838421474\\
220	0.0836843444273323\\
240	0.0776138797496422\\
260	0.0721737100003885\\
280	0.0679594449304121\\
};
\addlegendentry{RHub}

\addplot [color=black,line width=0.7pt, mark size=2.0pt,  mark=x, mark options={solid, black}]
  table[row sep=crcr]{%
60	0.289929808444324\\
80	0.256405770956407\\
100	0.232296032302098\\
120	0.209685721577823\\
140	0.193993649078867\\
160	0.180855360852838\\
180	0.167243314280258\\
200	0.159295498264014\\
220	0.149245783026175\\
240	0.140253038741189\\
260	0.134313496727826\\
280	0.127014461123792\\
};
\addlegendentry{RSCM}

\addplot [color=mycolor1,line width=0.7pt,mark size=2.0pt, mark=square, mark options={solid, mycolor1}] 
  table[row sep=crcr]{%
60	0.291212070004899\\
80	0.257149433235268\\
100	0.232822467790705\\
120	0.209897174252829\\
140	0.194053257076012\\
160	0.180866910867135\\
180	0.167320967396402\\
200	0.159103989092202\\
220	0.148982290296682\\
240	0.140055833054648\\
260	0.134059657306949\\
280	0.126844121260439\\
};
\addlegendentry{LW}

\addplot [color=mycolor2, line width=0.7pt, mark size=2.0pt, mark=triangle, mark options={solid, mycolor2}] 
  table[row sep=crcr]{%
60	0.440950550589464\\
80	0.363590441971697\\
100	0.314835587172849\\
120	0.276579995100958\\
140	0.254344694626137\\
160	0.233124934067971\\
180	0.210245447644158\\
200	0.200852611939607\\
220	0.188382370687852\\
240	0.172483571617598\\
260	0.16412349363849\\
280	0.153750089045556\\
};
\addlegendentry{RBLW}

\addplot [color=gray, line width=0.7pt, mark size=2.0pt, mark=asterisk, mark options={solid, gray}] 
  table[row sep=crcr]{%
60	0.302633147904937\\
80	0.267764389493273\\
100	0.243172373440151\\
120	0.219187849095752\\
140	0.202575158862016\\
160	0.188772626605334\\
180	0.174157605804473\\
200	0.165880311647368\\
220	0.155054097777486\\
240	0.145711857959601\\
260	0.139439605402387\\
280	0.131844410767306\\
};
\addlegendentry{CV}

\addplot [color=mycolor3, line width=0.7pt, mark size=2.0pt, mark=+, mark options={solid, mycolor3}]
  table[row sep=crcr]{%
60	0.222106339183032\\
80	0.180629134120674\\
100	0.154000466959643\\
120	0.13389969588509\\
140	0.118669032171595\\
160	0.106282126881315\\
180	0.0969037318336481\\
200	0.0886842738748952\\
220	0.0816666728262332\\
240	0.0757817303608897\\
260	0.0705847244044556\\
280	0.0663976729859238\\
};
\addlegendentry{RTyl}

\addplot [color=black!60!blue, line width=0.7pt, mark size=2.0pt, mark=triangle, mark options={solid, rotate=90, black!60!blue}] 
  table[row sep=crcr]{%
60	0.21556546749848\\
80	0.17822460028166\\
100	0.152583333392534\\
120	0.133009801594739\\
140	0.118034122910117\\
160	0.105785377414094\\
180	0.0965441432828716\\
200	0.0883660197369836\\
220	0.0814271971880657\\
240	0.0755879089195379\\
260	0.0704107035068735\\
280	0.0662479132098619\\
};
\addlegendentry{CWH}

\end{axis}

\begin{axis}[%
width=0.23\fwidth,
height=0.20\fwidth,
at={(0.302\fwidth,0.22\fwidth)},
scale only axis,
xmin=99,
xmax=101,
xtick={ 99, 100, 101},
ymin=0.151,
ymax=0.159,
 yticklabel style={/pgf/number format/.cd,fixed,precision=3},
axis background/.style={fill=white},
legend style={legend cell align=left, align=left, draw=white!15!black}
]
\addplot [color=red, line width=0.9pt, mark size=2.0pt,   mark=triangle, mark options={solid, rotate=180, red}] 
  table[row sep=crcr]{%
80	0.178764986455592\\
100	0.152725405216209\\
120	0.132889035187492\\
};

\addplot [color=blue, line width=0.9pt, mark size=2.0pt,  mark=o, mark options={solid, blue}] 
  table[row sep=crcr]{%
80	0.185124836212433\\
100	0.157660280897378\\
120	0.136989743999578\\
};

\addplot [color=mycolor3,line width=0.9pt, mark size=2.0pt,   mark=+, mark options={solid, mycolor3}]
  table[row sep=crcr]{%
80	0.180629134120674\\
100	0.154000466959643\\
120	0.13389969588509\\
};

\addplot [color=black!50!blue,  line width=0.9pt, mark size=2.0pt,   mark=triangle, mark options={solid, rotate=90, black!50!blue}] 
  table[row sep=crcr]{%
80	0.17822460028166\\
100	0.152583333392534\\
120	0.133009801594739\\
};

\end{axis}
\end{tikzpicture}
\setlength\fwidth{0.40\textwidth}
\subfloat[$n=60$]{
%
%
\definecolor{mycolor1}{rgb}{0.14200,0.67200,0.30300}%
\definecolor{mycolor2}{rgb}{0.40000,0.40000,0.00000}%
\begin{tikzpicture}
\scriptsize
\begin{axis}[%
width=0.48\fwidth,
height=0.48\fwidth,
at={(0\fwidth,0\fwidth)},
scale only axis,
xmin=0.01,
xmax=0.99,
xlabel style={font=\color{white!15!black}},
xlabel={\small $\rho$},
ymin=0.000504499890995826,
ymax=0.302391223125055,
ylabel style={font=\color{white!15!black}},
ylabel={$\| \hat{ \bo V}- \bo V \|_{\mathrm{F}}^2/\| \bo V \|_{\mathrm{F}}^2$},
axis background/.style={fill=white},
xmajorgrids,
ymajorgrids,
]
\addplot [color=red, mark options={solid, rotate=180, red}]
  table[row sep=crcr]{%
0.01	0.000725102387830098\\
0.1	0.0196157506410019\\
0.2	0.0695919001610987\\
0.3	0.130611059396178\\
0.4	0.183731025546231\\
0.5	0.21499961916004\\
0.6	0.218568297342243\\
0.7	0.195275028980625\\
0.8	0.149440639932588\\
0.9	0.0851088280839941\\
0.99	0.00901925113936942\\
};

\addplot [color=blue,mark=o, mark options={solid, blue}]
  table[row sep=crcr]{%
0.01	0.000607075888557948\\
0.1	0.0194756223613268\\
0.2	0.069833555961607\\
0.3	0.132447663858524\\
0.4	0.188240990282302\\
0.5	0.222630921571338\\
0.6	0.228775463413582\\
0.7	0.206591793321176\\
0.8	0.159814329866107\\
0.9	0.0921251115766955\\
0.99	0.0104884310217124\\
};

\addplot [color=black, mark=x, mark options={solid, black}]
  table[row sep=crcr]{%
0.01	0.000504499890995826\\
0.1	0.0194880493742978\\
0.2	0.0717168964208952\\
0.3	0.141770202098149\\
0.4	0.211980651222111\\
0.5	0.265489316908488\\
0.6	0.289929808444324\\
0.7	0.278479536980706\\
0.8	0.228886575500452\\
0.9	0.140335714400963\\
0.99	0.0204917756316276\\
};

\addplot [color=mycolor1, mark=square, mark options={solid, mycolor1}]
  table[row sep=crcr]{%
0.01	0.00360646834842675\\
0.1	0.02246653628072\\
0.2	0.0741844062661194\\
0.3	0.144001905760762\\
0.4	0.213849005303455\\
0.5	0.266992678894246\\
0.6	0.291212070004899\\
0.7	0.279716270536932\\
0.8	0.229982658956798\\
0.9	0.140256538688018\\
0.99	0.0186759024168597\\
};

\addplot [color=gray, mark=asterisk, mark options={solid, gray}]
  table[row sep=crcr]{%
0.01	0.000995073032543253\\
0.1	0.0198403297314513\\
0.2	0.0722032102751683\\
0.3	0.143582948617177\\
0.4	0.216121515570968\\
0.5	0.273279198358849\\
0.6	0.302391223125055\\
0.7	0.296597533652204\\
0.8	0.253377142827457\\
0.9	0.173650383566424\\
0.99	0.0693072511914826\\
};

\addplot [color=mycolor2,  mark=+, mark options={solid, mycolor2}]
  table[row sep=crcr]{%
0.01	0.000748482615194178\\
0.1	0.0196624926136079\\
0.2	0.0697961913590655\\
0.3	0.131350034423155\\
0.4	0.185419494248587\\
0.5	0.217752856248069\\
0.6	0.222106339183032\\
0.7	0.199021607948765\\
0.8	0.152671279147201\\
0.9	0.0870030697536966\\
0.99	0.0088542961484244\\
};

\addplot [color=black!70!blue, mark=triangle, mark options={solid, rotate=90, black!70!blue}]
  table[row sep=crcr]{%
0.01	0.000700325627522554\\
0.1	0.0195709613837802\\
0.2	0.0693700585096262\\
0.3	0.129535591808867\\
0.4	0.181520944008636\\
0.5	0.212278892204802\\
0.6	0.21556546749848\\
0.7	0.191678313304104\\
0.8	0.145240366360079\\
0.9	0.081021175133647\\
0.99	0.0068435131824872\\
};

\end{axis}
\end{tikzpicture}
\subfloat[$n=120$]{
%
%
\definecolor{mycolor1}{rgb}{0.14200,0.67200,0.30300}%
\definecolor{mycolor2}{rgb}{0.40000,0.40000,0.00000}%
\begin{tikzpicture}
\scriptsize
\begin{axis}[%
width=0.48\fwidth,
height=0.48\fwidth,
at={(0\fwidth,0\fwidth)},
scale only axis,
xmin=0.01,
xmax=0.99,
xlabel style={font=\color{white!15!black}},
xlabel={\small $\rho$},
ymin=0.00036272596943515,
ymax=0.219037924853015,
 yticklabel style={/pgf/number format/.cd,fixed,precision=2},
axis background/.style={fill=white},
xmajorgrids,
ymajorgrids,
]
\addplot [color=red, mark=triangle, mark options={solid, rotate=180, red}]
  table[row sep=crcr]{%
0.01	0.000505120936454884\\
0.1	0.0189176002108171\\
0.2	0.062155344046241\\
0.3	0.106276160227824\\
0.4	0.134785214841453\\
0.5	0.142793170329264\\
0.6	0.132889035187492\\
0.7	0.110095974213911\\
0.8	0.079003129571135\\
0.9	0.0421330141534193\\
0.99	0.00354113901803021\\
};

\addplot [color=blue, mark=o, mark options={solid, blue}]
  table[row sep=crcr]{%
0.01	0.000469311947090165\\
0.1	0.0188930391883767\\
0.2	0.0625702970526931\\
0.3	0.107767116675541\\
0.4	0.13757537318512\\
0.5	0.146552976705113\\
0.6	0.136989743999578\\
0.7	0.113886177571156\\
0.8	0.0819466382596226\\
0.9	0.0438288410384132\\
0.99	0.00380112306442088\\
};

\addplot [color=black, mark=x, mark options={solid, black}]
  table[row sep=crcr]{%
0.01	0.00036272596943515\\
0.1	0.0191025576635515\\
0.2	0.0676716172726876\\
0.3	0.127006477599916\\
0.4	0.177741572872004\\
0.5	0.206914516396337\\
0.6	0.209685721577823\\
0.7	0.187133191394226\\
0.8	0.14299309847989\\
0.9	0.0802440555333465\\
0.99	0.00888030461526212\\
};

\addplot [color=mycolor1, mark=square, mark options={solid, mycolor1}]
  table[row sep=crcr]{%
0.01	0.00114103094821682\\
0.1	0.0198470917059182\\
0.2	0.0686022589405766\\
0.3	0.127838094890577\\
0.4	0.17837421299539\\
0.5	0.207321149870929\\
0.6	0.209897174252829\\
0.7	0.187193478131448\\
0.8	0.142869581165313\\
0.9	0.079687212366745\\
0.99	0.00805853358597888\\
};

\addplot [color=gray, mark=asterisk, mark options={solid, gray}]
  table[row sep=crcr]{%
0.01	0.000624708012850996\\
0.1	0.0193224975517778\\
0.2	0.0686009533489917\\
0.3	0.129309550819016\\
0.4	0.182191947250933\\
0.5	0.213911396702749\\
0.6	0.219037924853015\\
0.7	0.198617431686971\\
0.8	0.156591175020703\\
0.9	0.0976549147951894\\
0.99	0.0358747379022347\\
};

\addplot [color=mycolor2, mark=+, mark options={solid, mycolor2}]
  table[row sep=crcr]{%
0.01	0.000512074408790379\\
0.1	0.018937860073575\\
0.2	0.0622707409488322\\
0.3	0.106643558425067\\
0.4	0.135469139559923\\
0.5	0.143716992739195\\
0.6	0.13389969588509\\
0.7	0.111034820728456\\
0.8	0.0797439869392675\\
0.9	0.0425743700161754\\
0.99	0.00356248599188928\\
};

\addplot [color=black!70!blue,  mark=triangle, mark options={solid, rotate=90, black!70!blue}]
  table[row sep=crcr]{%
0.01	0.000490721441547201\\
0.1	0.0188996121838204\\
0.2	0.062094093616778\\
0.3	0.106053831767406\\
0.4	0.134628244318899\\
0.5	0.142840972671412\\
0.6	0.133009801594739\\
0.7	0.110121892813474\\
0.8	0.0788706389559361\\
0.9	0.0418642588810951\\
0.99	0.00318987280649322\\
};

\end{axis}
\end{tikzpicture}
\caption{ NMSE of different shrinkage estimators of shape matrix $\bo V$  when samples are drawn a $p=40$ variate $t$-distribution with  $\nu=5$ d.o.f. having  an AR(1)  structure: (a) $\varrho=0.6$  and  sample length $n$ varies; (b) and (c) illustrate the case when $\varrho$ varies  while the  sample length $n$ is fixed.} \label{fig:AR1shape}
\end{figure}
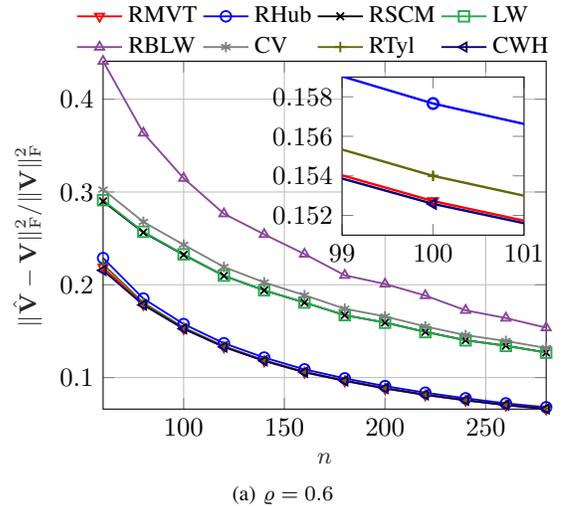
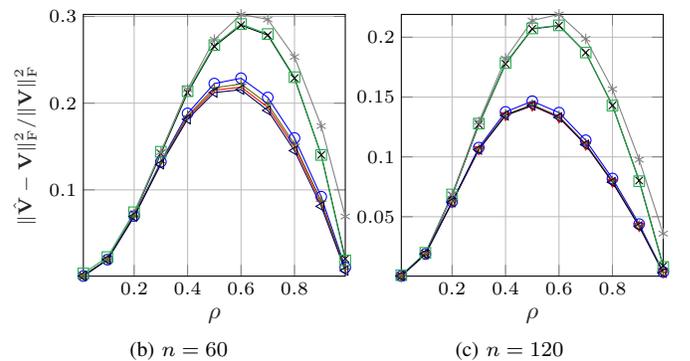

\subsection{Complex-valued data} 

Finally, we note that an important property of our shrinkage method is that it can be used for complex-valued data as well. Some other methods in the previous study, such as RBLW or LW assume real-valued data. In the supplementary material, we provide the results of a simulation study in the same set-up, but now the data being generated from circular complex Gaussian and heavy-tailed $t$-distribution, respectively. These are distributions in the class of CES distributions. In our study we also include empirical Bayes diagonal loading estimator (EBDL) \cite{coluccia2015regularized} which was developed for complex circular Gaussian data. Results obtained for complex-valued data attest the validity of the findings in the real-valued case.


\section{Application to financial data and portfolio design}  \label{sec:applic}

\subsection{Financial data}
We use the S\&P 500 stock market index (see \autoref{fig:SP500-prices}), which measures the stock performance of 500 large companies listed on stock exchanges in the United States, and its constituent stocks during the period 2016-01-01 to 2020-01-31.

We can easily observe from the returns shown in \autoref{fig:SP500-returns} the effect of volatility clustering over time that is responsible for heavy tails. More concretely, if we assume that the data follows an MVT distribution, then we can compute the degrees of freedom $\nu$ on a rolling-window basis and verify that indeed the data has heavy tails with $\nu \approx 5$ (from mid-2017, $\nu$ varies between $4.5$ and $6$) as shown in \autoref{fig:SP500-nu}.

\begin{figure}[!t]
\centerline{\includegraphics[width=0.5\textwidth]{./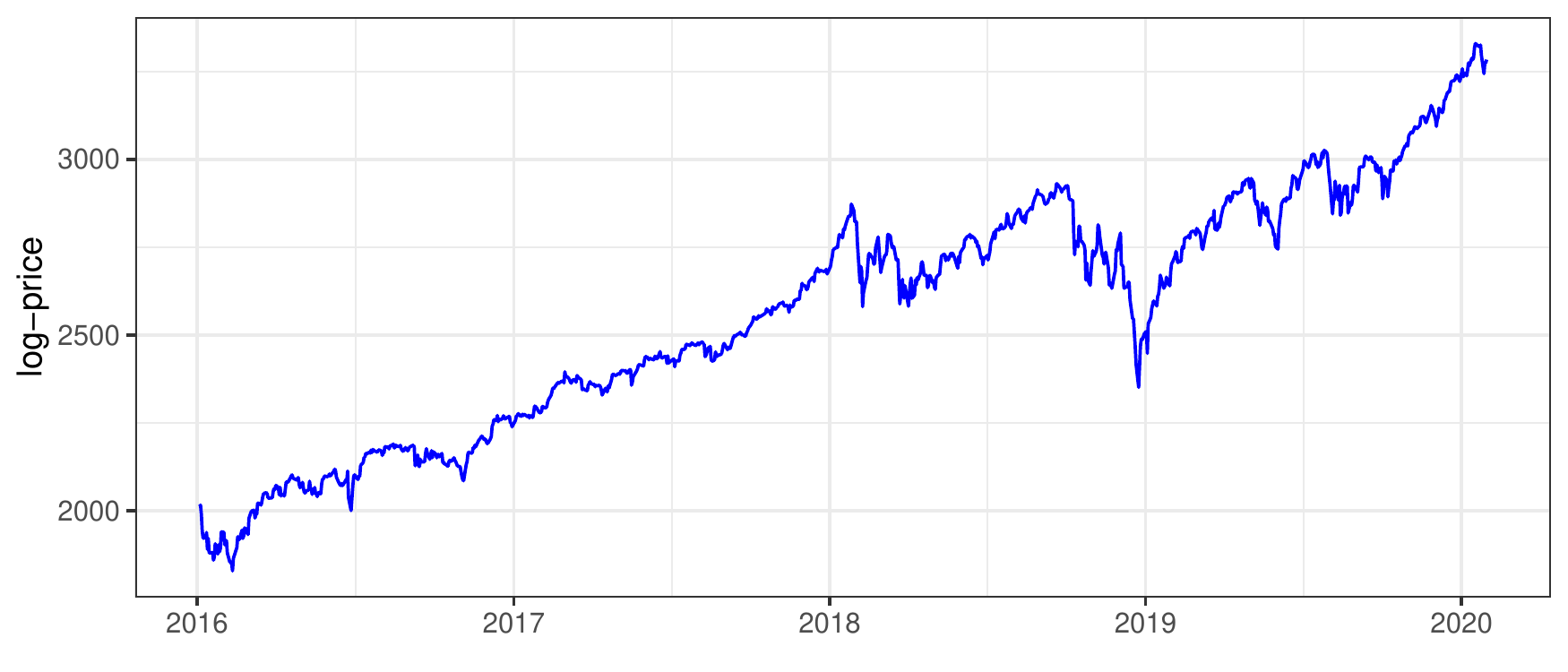}}
\caption{Log-prices of the  S\&P 500 index.} \label{fig:SP500-prices}
\end{figure}

\begin{figure}[!t]
\centerline{\includegraphics[width=0.5\textwidth]{./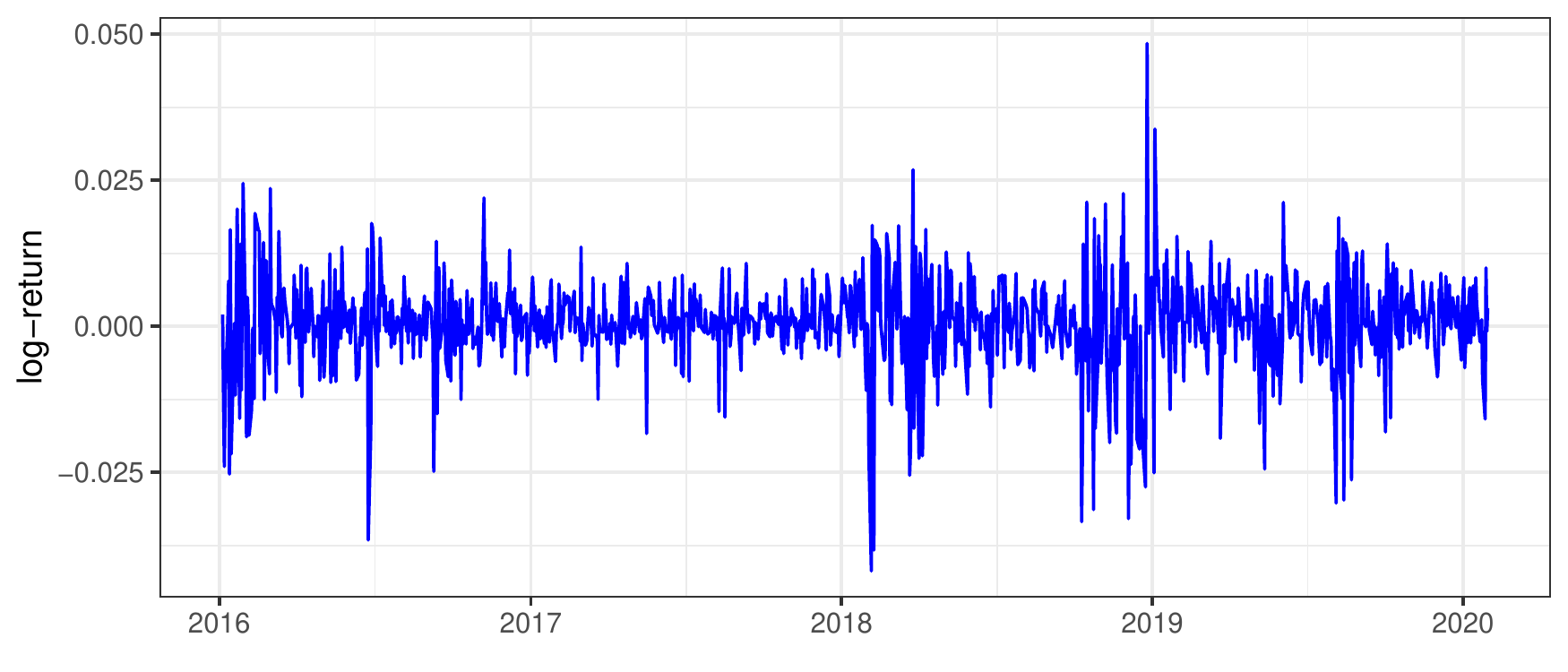}}
\caption{Log-returns of the  S\&P 500 index.} \label{fig:SP500-returns}
\end{figure}

\begin{figure}[!t]
\centerline{\includegraphics[width=0.5\textwidth]{./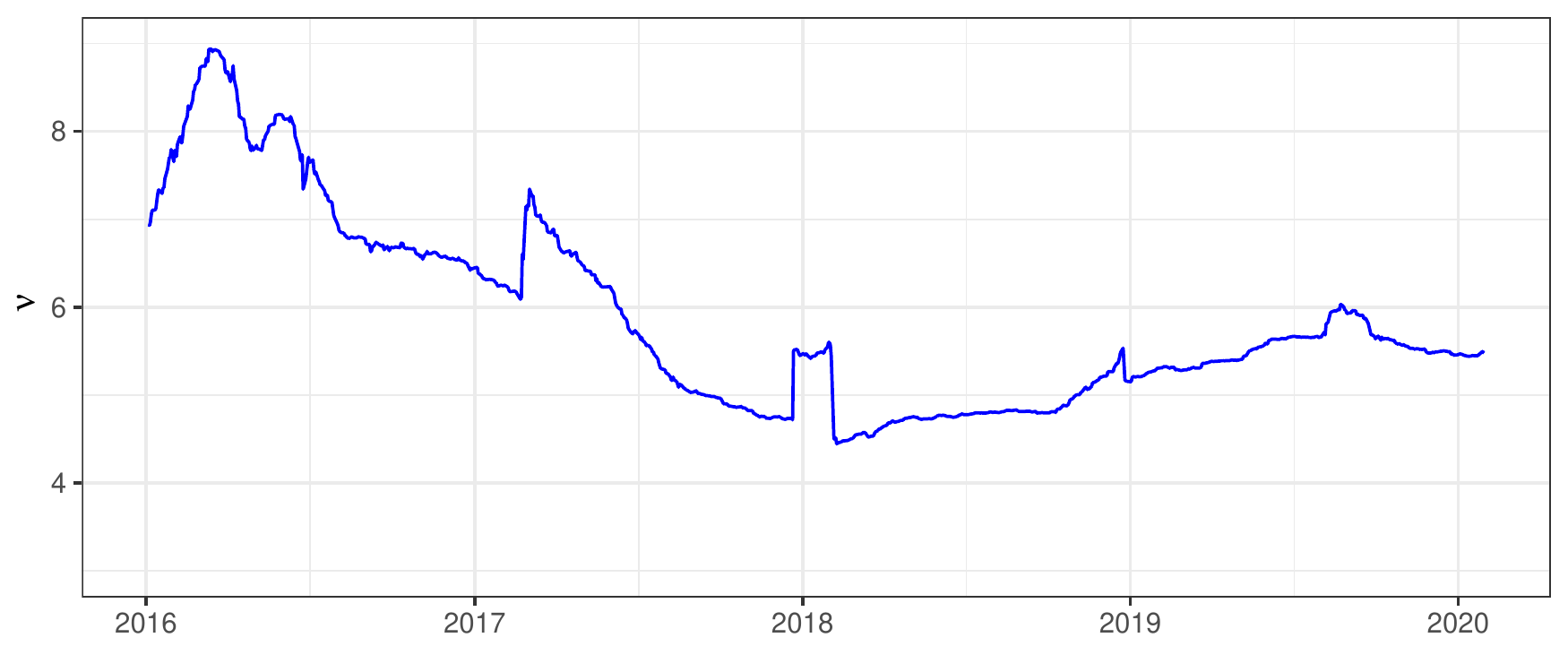}}
\caption{Degrees of freedom $\nu$ of the  S\&P 500 index.} \label{fig:SP500-nu}
\end{figure}

\textbf{Factor model.} Let $\mathbf{x}_i$ denote the returns of the $p$ stocks at time $i$. It turns out that the stock returns are largely driven by very few $k \ll p$ financial factors $\mathbf{f}_i$ as
\begin{equation}
\mathbf{x}_i = \mathbf{B}\mathbf{f}_i + \boldsymbol{\epsilon}_i,
\end{equation}
where $\mathbf{B}\in\mathbb{R}^{p\times k}$ is the factor loading matrix (which is very tall since $k \ll p$) and $\boldsymbol{\epsilon}_i$ is the residual idiosyncratic component. As a consequence, the covariance matrix of these data has the form (assuming normalized factors):
\begin{equation}
\boldsymbol{\Sigma} = \mathbf{B}\mathbf{B}^\top + \boldsymbol{\Psi}
\end{equation}
where $\boldsymbol{\Psi}$ is the (diagonal) covariance matrix of the residuals. Typically, the term $\mathbf{B}\mathbf{B}^\top$ is much stronger than $\boldsymbol{\Psi}$ and this leads to the commonly used spike model in RMT (Random Matrix Theory) \cite{BunBouchaudPotters2016} which contains a few large eigenvalues and the rest small eigenvalues form the so-called bulk.  \autoref{fig:histogram-eigenvalues} shows the histogram of the empirical eigenvalues of the covariance matrix estimated from $p=50$ stocks from the S\&P 500 market data, where a very strong eigenvalue can be observed corresponding to the market factor.

\begin{figure}[!t]
\centerline{\includegraphics[width=0.5\textwidth]{./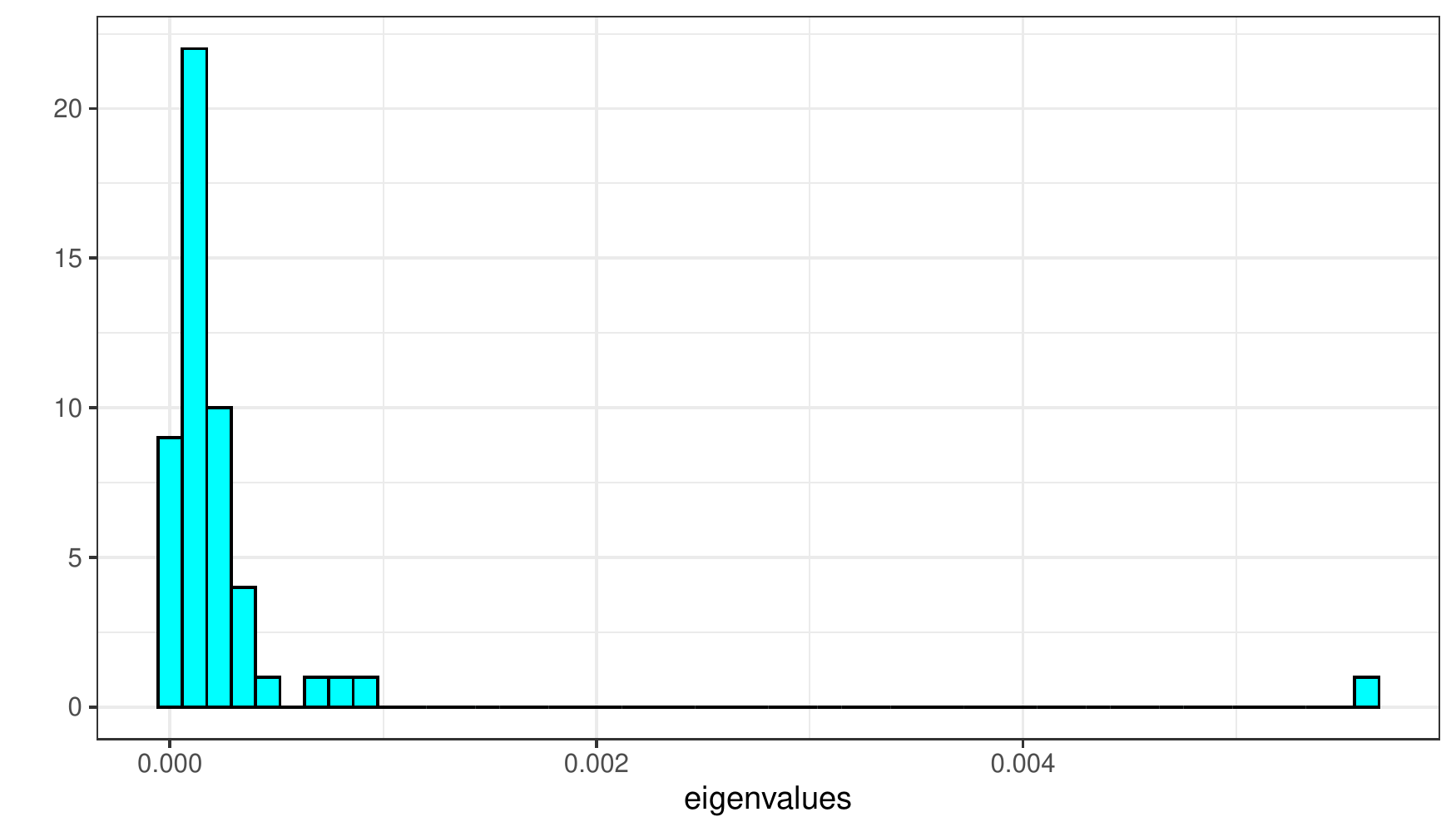}}
\caption{Histogram of empirical eigenvalues obtained from the market data ($p=50$) showing a strong market factor.} \label{fig:histogram-eigenvalues}
\end{figure}

\subsection{Results in terms of MSE}
Since our shrinkage estimators are derived to minimize the MSE in the estimation of the covariance matrix, we start by showing the obtained MSE in the context of financial data. We consider seven methods in our comparison:
\begin{itemize}
  \item LWE is the Ledoit-Wolf estimator \cite{ledoit2004well};
  \item RMVT-Ell1 described in \autoref{subsect:tMLE};
  \item MVT: equals RMVT-Ell1  with no shrinkage ($\beta=1$);
  \item RSCM-Ell1 estimator described in \autoref{subsec:RSCM};
  \item SCM:  equals RSCM-Ell1  with no shrinkage ($\beta=1$);
  \item RHub-Ell1 described in \autoref{subsect:huber}; and
  \item RSCM-Ell2, RMVT-Ell2, and RHub-Ell2 are as RSCM-Ell1, RMVT-Ell1, and RHub-Ell1, respectively, but using $\hat \gamma^{\text{Ell2}}$ estimator of sphericity $\gamma$ ({\it cf.} \autoref{subsec:pract_comp}).
\end{itemize}
 
To make sure that the robust estimators do not underperform the benchmarks when the data is not heavy-tailed, we start by generating synthetic MVN data following the empirical covariance matrix previously obtained from market data (see \autoref{fig:histogram-eigenvalues}). \autoref{fig:MSE-covmat-Gaussian} displays the normalized MSE of the covariance matrix $\mathbb{E}\big[\|\hat{\boldsymbol{\Sigma}} - \boldsymbol{\Sigma}\|_\Fr^2\big]$ via 200 Monte-Carlo simulations. We do not observe any significant difference among the methods.  \autoref{fig:MSE-precmat-Gaussian} shows the normalized MSE of the precision matrix $\mathbb{E}\big[\|\hat{\boldsymbol{\Sigma}}^{-1} - \boldsymbol{\Sigma}^{-1}\|_\Fr^2\big]$ via 200 Monte-Carlo simulations. The main observation is that the two methods without shrinkage significantly underperform. 

\begin{figure}[!t]
\centerline{\includegraphics[width=0.5\textwidth]{./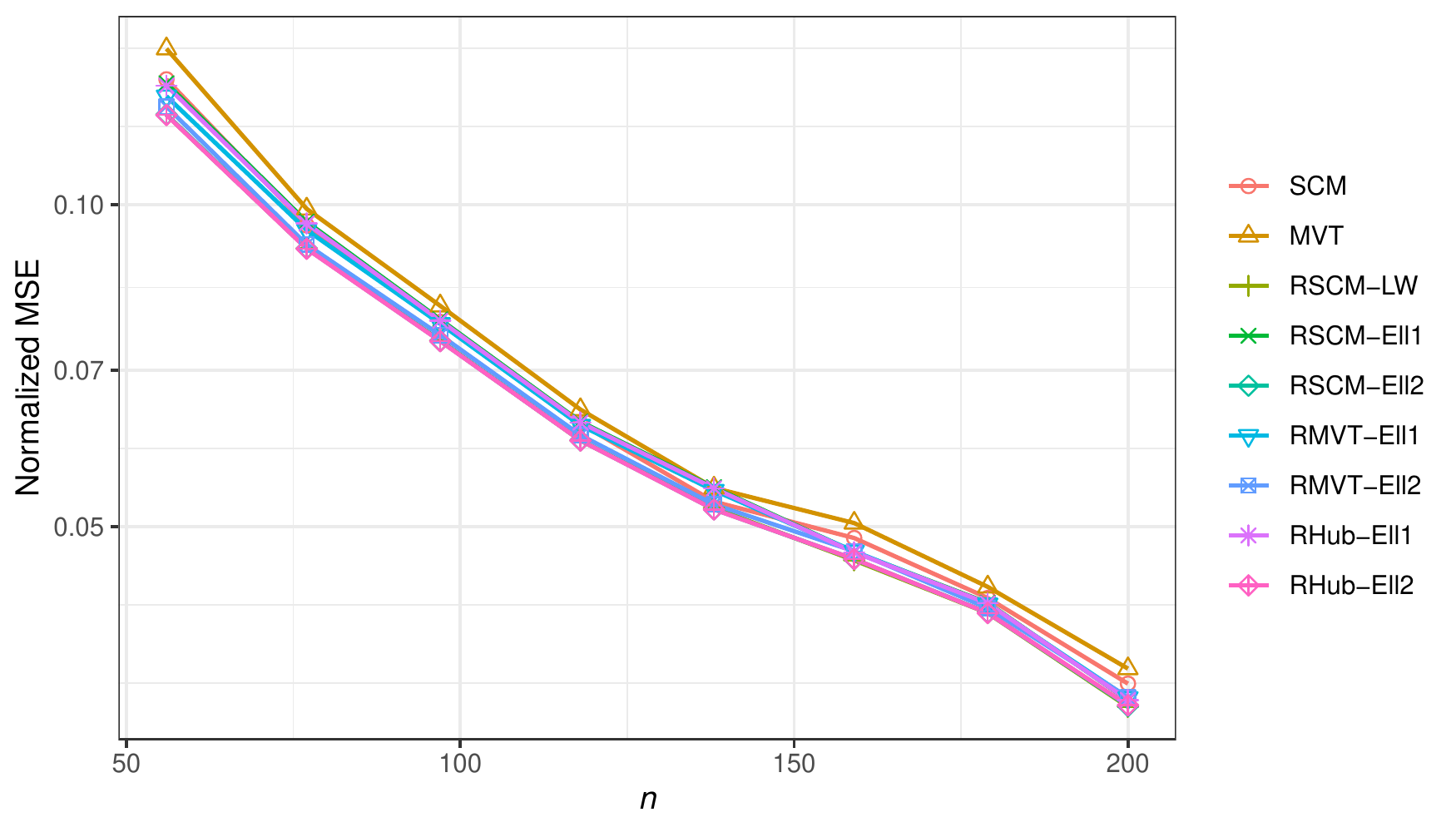}}
\caption{Normalized MSE of covariance matrix vs number of observations for Gaussian data (with $p=50$).} \label{fig:MSE-covmat-Gaussian}
\end{figure}

\begin{figure}[!t]
\centerline{\includegraphics[width=0.5\textwidth]{./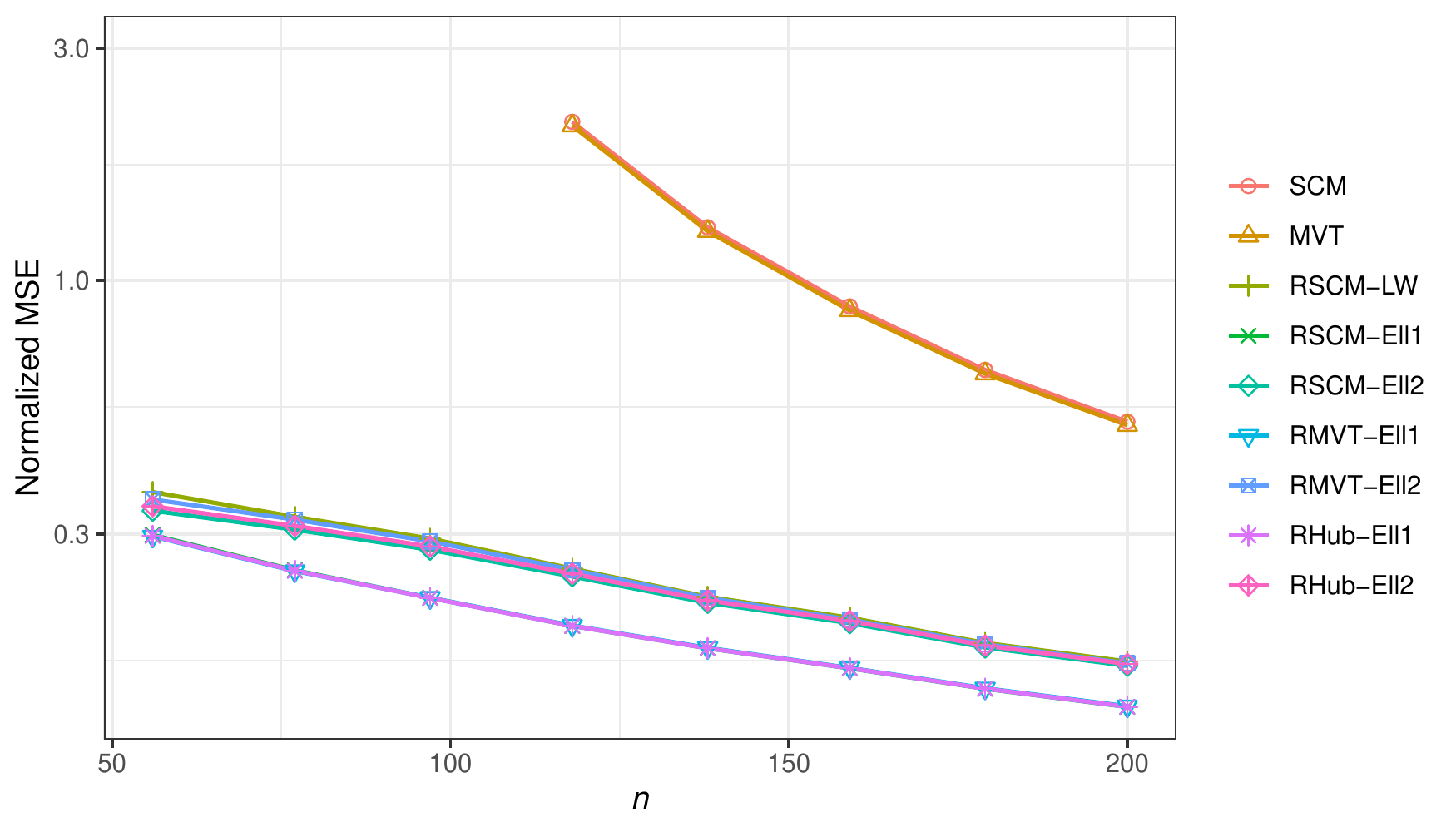}}
\caption{Normalized MSE of precision matrix vs number of observations for Gaussian data (with $p=50$).} \label{fig:MSE-precmat-Gaussian}
\end{figure}

We now generate heavy-tailed synthetic data from a $t$-distribution following the empirical covariance matrix previously obtained from market data (see \autoref{fig:histogram-eigenvalues}) and with d.o.f. $\nu = 5$. \autoref{fig:MSE-covmat-t} shows the normalized MSE of the covariance matrix. We can clearly observe a striking difference showing the superiority of robust estimators (i.e.,
MVT, RMVT, RHub) over non-robust estimators. Among the robust estimators, the proposed shrinkage methods clearly outperform the non-shrinkage
MVT in the low-sample regime, as expected.  \autoref{fig:MSE-precmat-t} displays the normalized MSE of the precision matrix, with similar observations. 

\begin{figure}[!t]
\centerline{\includegraphics[width=0.5\textwidth]{./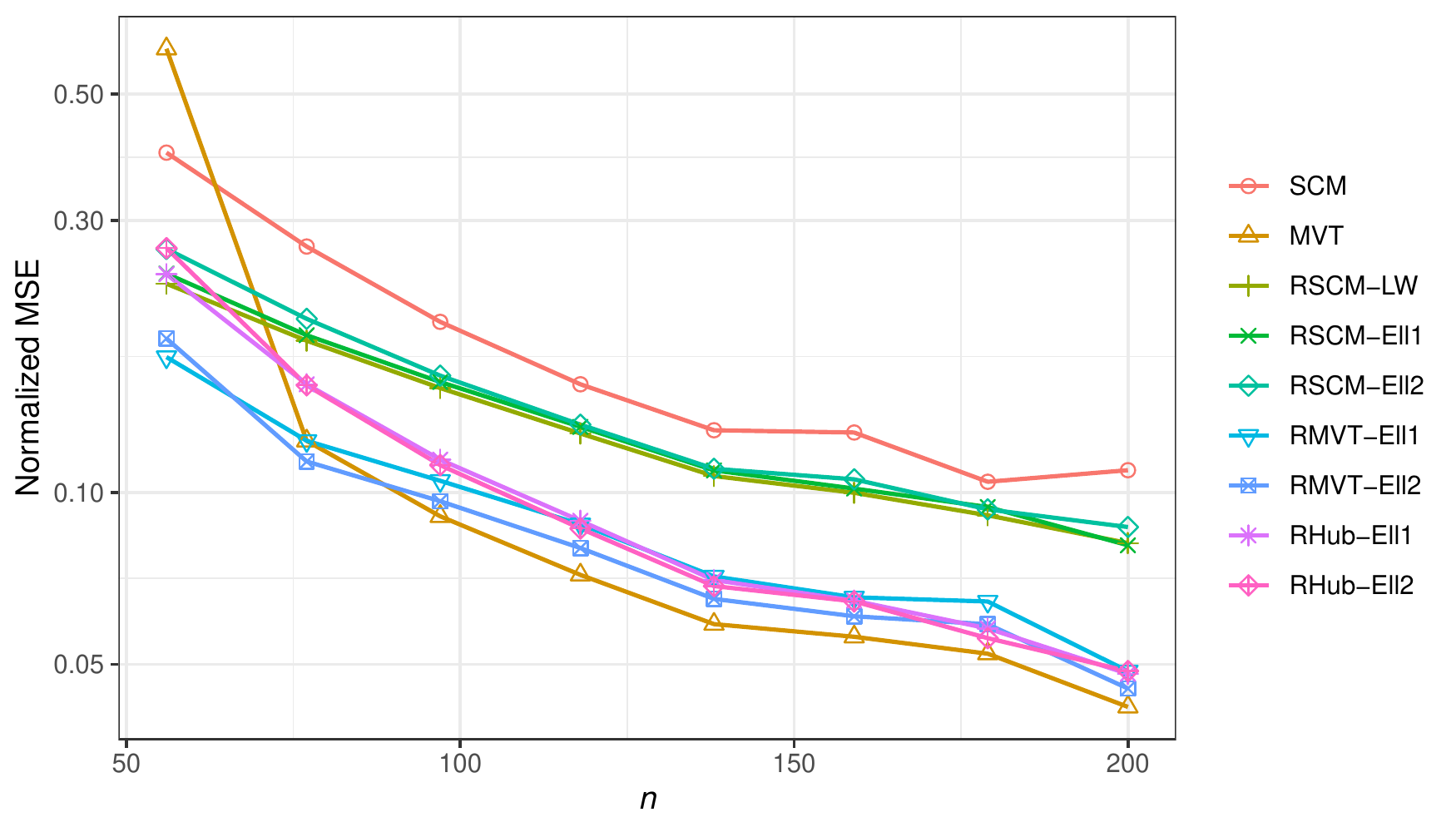}}
\caption{Normalized MSE of covariance matrix vs number of observations for $t$-distributed data (with $p=50$, $\nu=5$).} \label{fig:MSE-covmat-t}
\end{figure}

\begin{figure}[!t]
\centerline{\includegraphics[width=0.5\textwidth]{./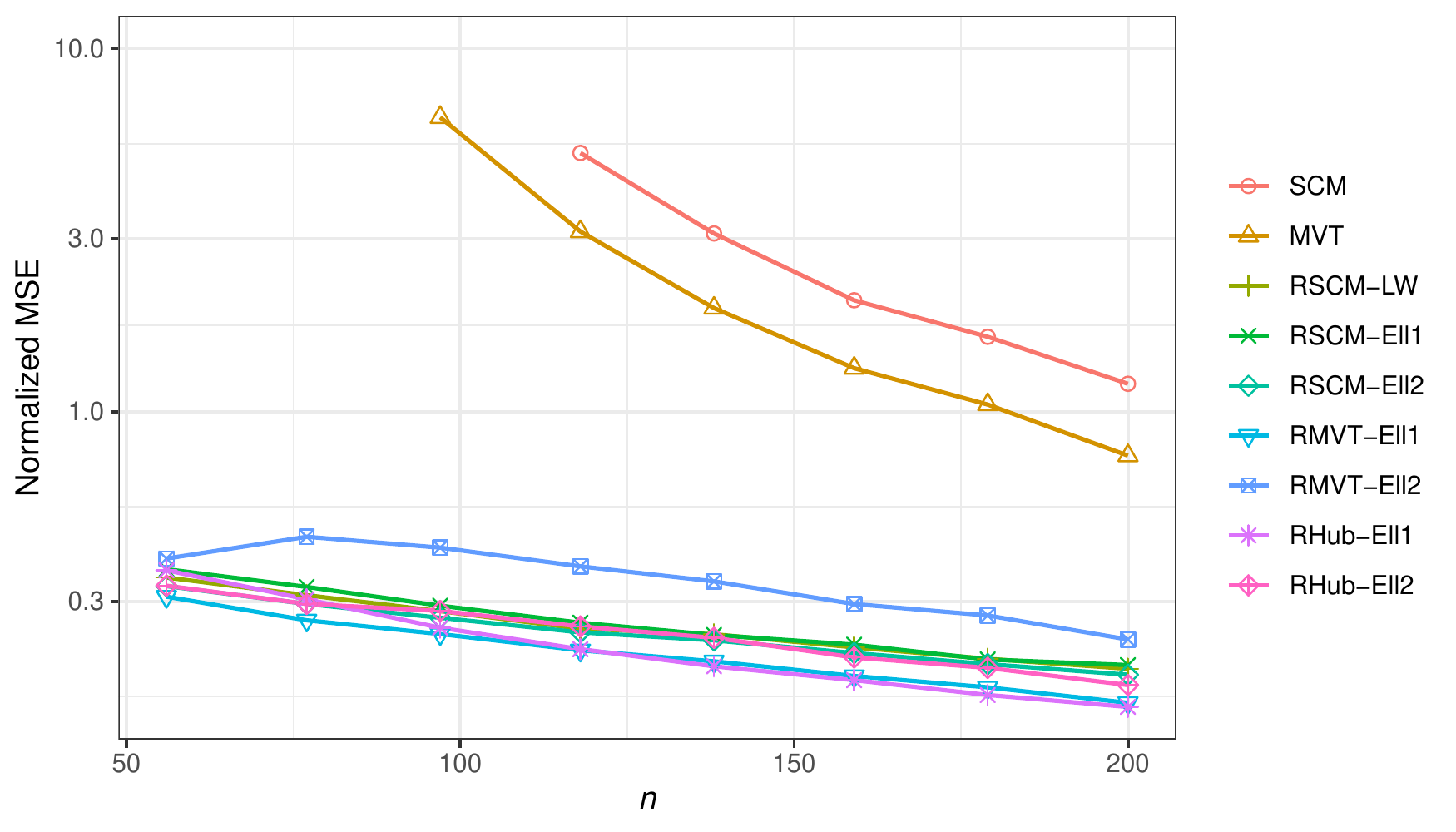}}
\caption{Normalized MSE of precision matrix vs number of observations for $t$-distributed data (with $p=50$, $\nu=5$).} \label{fig:MSE-precmat-t}
\end{figure}

\subsection{Results in terms of portfolio Sharpe ratio} \label{subsec:SR} 

After having shown the superiority of our proposed estimators, RMVT and RHub in terms of MSE in the estimation of the covariance matrix under heavy-tailed data, we now turn to assess the effects in terms of portfolio design. Note that an improvement of MSE in the covariance matrix may or may not translate into a significant improvement in terms of portfolio design; this depends on exactly what portfolio design is used and how it employs the estimated covariance matrix.

For simplicity, we consider the most basic Markowitz portfolio design \cite{Markowitz1952}:
$$\begin{array}{ll}
\underset{\mathbf{w}}{\textsf{minimize}} & \mathbf{w}^\top\boldsymbol{\Sigma}\mathbf{w}\\
\textsf{subject to} & \mathbf{w}^\top\boldsymbol{\mu} \ge \alpha\\
& \mathbf{1}^\top\mathbf{w}=1,
\end{array}$$
where $\boldsymbol{\mu}$ is the expected return of the returns and $\alpha$ is the minimum return desired for the portfolio.

We perform our backtest during the market period 2016-01-01 to 2020-01-31 on a rolling-window basis with a window length of 378 days (1.5 years). To make sure that our results are realistic, rather than performing a single backtest, we use the R package portfolioBacktest \cite{portfolioBacktest2019} to randomly select a large number of 200 datasets from the market data in the following way: each dataset chooses randomly $p=200$ stocks from the universe of 500, as well as a random period of 504 days (2 years) among the available period from 2016-01-01 to 2020-01-31.

\autoref{fig:MVP-SR} shows a boxplot with the Sharpe ratio\footnote{The Sharpe ratio is defined as the expected return normalized with the volatility or standard deviation: $\textsf{SR}=\frac{\mathbf{w}^\top\boldsymbol{\mu} - r_f}{\sqrt{\mathbf{w}^\top\boldsymbol{\Sigma}\mathbf{w}}}$, where $r_f$ is the return of the risk-free asset.} obtained mean-variance portfolio according to different estimators for the covariance matrix (along with two benchmarks: the index and the $1/N$ or uniform portfolio). We can observe that the two methods without shrinkage underperform the shrinkage methods (in particular the SCM). Among the shrinkage methods, we can see that our robust estimators slightly outperform the others, although the improvement is not extremely significative.

\begin{figure}[!t]
\centerline{\includegraphics[width=0.5\textwidth]{./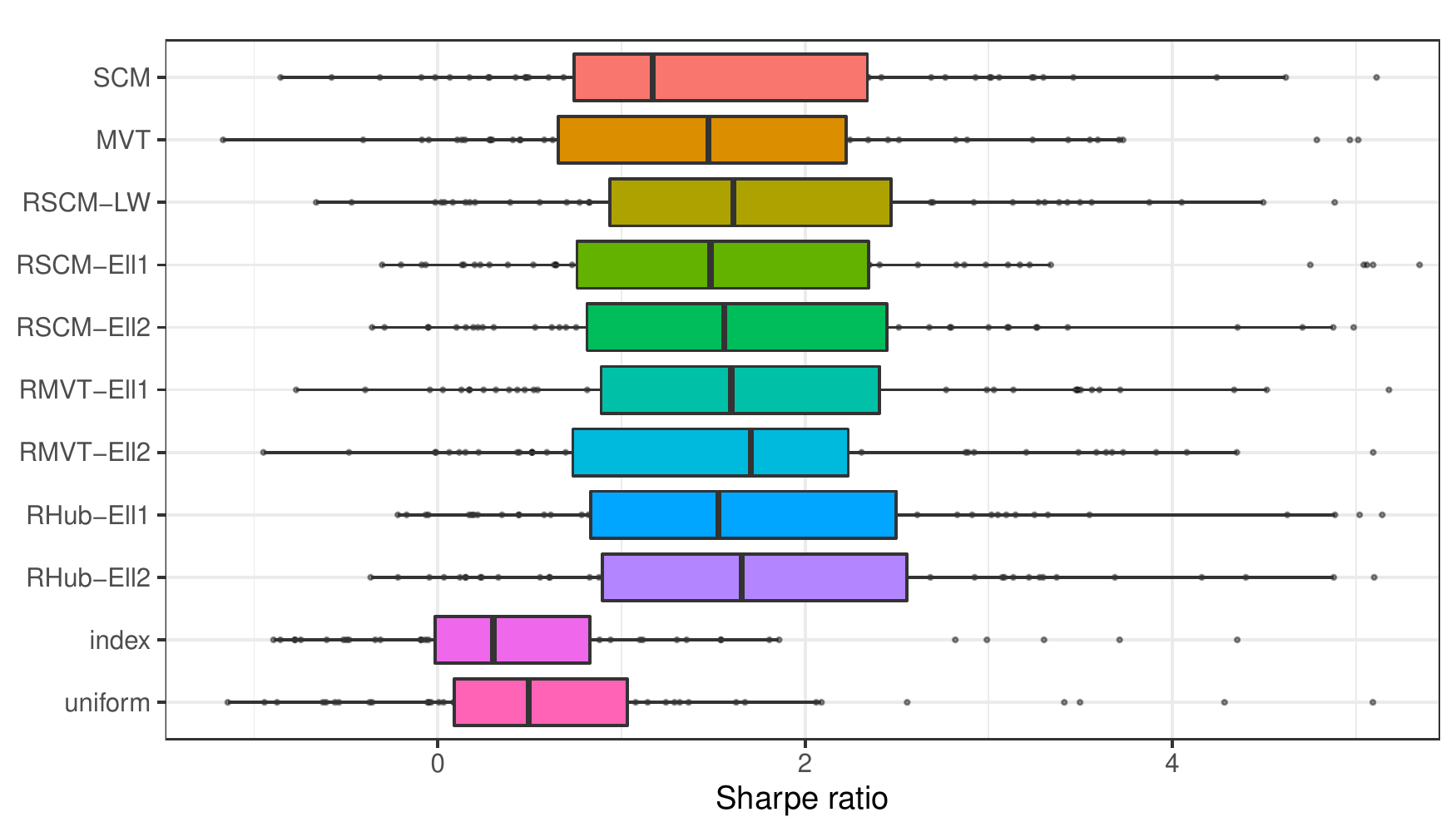}}
\caption{Boxplot of Sharpe ratio obtained by the mean-variance portfolio according to different estimators for the covariance matrix.} \label{fig:MVP-SR}
\end{figure}

\subsection{Supplementary studies} 

Supplementary material to this paper also contain comparison of the proposed methods  in the set-up described in \cite{ollila2019optimal}, where the global mean variance portfolio (GMVP)  is used as portfolio optimization strategy and the net returns correspond to $p=50$ stocks that are currently included in the Hang Seng Index (HSI).  
Compared methods include GMVP weight vector based on LW estimator and an estimator proposed in \cite{yang2015robust} that uses robust regularized Tyler's M-estimator with a tuning parameter selection  optimized for GMVP strategy.  We observed that GMVP based on the  proposed  RHub and RSCM are  the best performing methods in terms of realized risk.  We note that method of \cite{yang2015robust} was excluded in the study in \autoref{subsec:SR}  due to its high computational cost  in the high-dimensional setting.


\section{Conclusions and perspectives} \label{sec:concl}

This work proposed an original and fully automatic approach to compute an optimal shrinkage parameter in the context of heavy-tailed distributions and/or in presence of outliers. It has been shown that the performance of the method is similar to optimal one when the data is Gaussian while it outperforms shrinkage Gaussian-based methods when the data distribution turns out to be non-Gaussian. One of the benefits of the proposed adaptive shrinkage parameter selection is that it permits using real-valued or complex-valued data.  Furthermore, a MATLAB toolbox called ShrinkM is freely available at \url{http://users.spa.aalto.fi/esollila/shrinkM/} that includes functions to compute all of the proposed estimators (RHub, RTyl, RSCM, RMVT, and CV) as well as a script of one the simulation studies presented in the paper to reproduce the results. 
Furthermore, this paper opens several ways, notably considering the challenging cases where $p>n$ which is left for future work. Supplementary materials  also provide additional examples illustrating the benefits of the proposed estimators. 

\appendix

\subsection{Proof of \autoref{th:beta0}}\label{app:th:beta0}

Write $L(\be) = \MSE(\bo C_\be) = \E[\| \bo C_\be -\Mn\|_{\Fr}^2]$. Then note that 
\begin{align}
L(\be)
&= \E \big[  \big \| \be \bo C + (1- \be) p^{-1} \tr(\bo C) \I - \Mn \big\|^2_{\Fr} \big] \notag  \\ 
&= \E \Big[  \big \| \be(\bo C - \Mn)  + (1- \be) \big ( p^{-1} \tr(\bo C)  \I - \Mn \big) \big\|^2_{\Fr} \Big] \notag  \\   
&= \be^2 a_1 +  (1- \be)^2  a_2 + 2\be(1-\be) a_3 \label{eq:Th1_Lbe}
\end{align}
where 
 $a_1 =\E \big[\big\| \bo C - \Mn \big\|_{\Fr}^2] =  \E \big[ \tr(\bo C^2) \big] - \tr(\Mn^2)$, and
 \begin{align*}
a_2 &= \E \big [ \big\|  p^{-1} \tr(\bo C) \I  - \Mn  \big\|^2_{\Fr} \big] \\
&= a_3+ \tr(\Mn^2) -  p \eta_o^2  = a_3 + p ( \gamma -1)\eta_o^2  \\ 
a_3 &= p^{-1} \E \big[  \tr(\bo C) \tr(\bo C - \Mn) \big]  
= p^{-1}\E\big[	\tr (\bo C)^2  \big]  - \eta_o^2 p
\end{align*} 
and $\eta_o = \tr(\Mn)/p$. 
Note that $L(\be)$ is a convex quadratic function in $\be$ with a unique minimum given by 
\beq \label{eq:Th1_be0}
\be_o^\mathrm{app} =  \frac{a_2-a_3}{ (a_1-a_3) + (a_2-a_3)}.  
\eeq 
Substituting the expressions for constants $a_1, a_2$ and $a_3$ into $\be_o^\mathrm{app} $ yields the stated result.  

The expression for MSE of $\bo C_{\be_o^\mathrm{app}}$ then follows by substituting $\be_o^\mathrm{app}$ into expression for $L(\be)$ in \eqref{eq:Th1_Lbe} 
and using the relation, $(1-\be_o^\mathrm{app})  (a_2-a_3) = \be_o^\mathrm{app}(a_1-a_3)$,  
which follows from \eqref{eq:Th1_be0}. This yields the MSE expression 
\[
L(\be_o^\mathrm{app}) = a_2 - \be_o^\mathrm{app}(a_2-a_3)  =  a_3 + (1- \be_o^\mathrm{app}) (a_2-a_3).
\]
This gives the stated MSE expression after  noting that  $a_2-a_3 = \| \Mn - \eta_0 \bo I \|_{\Fr}^2$.

\subsection{Proof of \autoref{lem:EtrC}} \label{app:lem:EtrC}

First recall  that  $\Mn = \sca \M$, and hence
\begin{align} 
\Mor &=  \frac 1 n \sum_{i=1}^n u(\x_i^\top \Mn^{-1} \x_i ) \x_i \x_i^\top  \notag  \\ 
 	&= \Mn^{1/2}  \bigg \{  \frac 1 n \sum_{i=1}^n u\Big(\frac{r^2_i}{\sca}\Big) \frac{r_i^2}{\sca}  \u_i \u_i^\top  \bigg\} \Mn^{1/2}  ,  \label{eq:C_apu}
\end{align} 
where $\u_i = \M^{-1/2} \x_i/\|\M^{-1/2} \x_i\| $ and $r_i^2 = \|\M^{-1/2} \x_i\|^2$.   
Recall that stochastic representation theorem of elliptical random vectors states that $r_i$ is independent of $\u_i$ and $\u_i$-s 
are i.i.d. on a uniform distribution on the  unit sphere $\mathcal S^{p-1}=\{ \u \, : \,  \u^\top \u = 1\}$. 
Then note that 
\begin{align} \label{eq:EtraceC^2}
 &\E  \big[\tr \! \big(\Mor^2\big) \big]  \notag \\
 &= \frac{1}{\ndim^2} \E \bigg[\tr \!  \bigg\{  \Mn^{1/2}  \sum_{i=1}^\ndim u \Big( \frac{r^2_i}{\sca}\Big) \frac{r^2 _i}{\sca} \u_i \u_i^\top  \Mn^{1/2} \notag  \\ 
 &\qquad  \cdot \Mn^{1/2}  \sum_{j=1}^\ndim u \Big( \frac{r^2_j}{\sca}\Big) \frac{r^2_j}{\sca} \u_j \u_j^\top  \Mn^{1/2}  \bigg\} \bigg]  \notag \\ 
 &=    \frac{ 1 }{\ndim^2} \sum_{i=1}^\ndim \sum_{j=1}^\ndim    \E \bigg[   u \Big( \frac{r^2_i}{\sca}\Big) \frac{r^2 _i}{\sca}  \,  u\Big( \frac{r^2_j}{\sca}\Big)   \frac{r^2 _j}{\sca} \bigg]  \notag \\ 
& \qquad \times \E \bigg[ \underbrace{ \tr \!   \bigg\{   \u_i \u_i^\top  \Mn  \u_j \u_j^\top   \Mn  \bigg\}  }_{  = ( \u_i^\top \Mn \u_j)^2} \bigg]   \notag  \\ 
 &=    \frac{ 1 }{\ndim^2} \sum_{i=1}^\ndim  \E \bigg[ u \Big( \frac{r^2_i}{\sca}\Big)^2 \, \frac{r^4 _i}{\sca^2} \bigg]   \E \big[   ( \u_i^\top \Mn \u_i)^2 \big]  
\notag \\ &\qquad + 
 \frac{ 1 }{\ndim^2} \sum_{i \neq j }\E \bigg[  u \Big( \frac{r^2_i}{\sca} \Big) \frac{r^2 _i}{\sca} \bigg]     \E \bigg[  u\Big( \frac{r^2_j}{\sca} \Big) \frac{r^2_j}{\sca}  \bigg] \E \big[   ( \u_i^\top \Mn \u_j)^2 \big] \   \notag  \\
 &=    \frac{ 1 }{\ndim}  \E \bigg[   u \Big( \frac{r^2_1}{\sca}\Big)^2  \, \frac{r^4_1}{\sca^2} \bigg]   \E \big[   ( \u^\top_1 \Mn \u_1)^2 \big] \notag \\ 
 & \qquad  +  \Big( 1- \frac{1}{\ndim} \Big)   \bigg(\E \bigg[  u \Big( \frac{r^2_1}{\sca}\Big) \, \frac{r^2_1}{\sca} \bigg] \bigg)^2  \,   \E \big[   ( \u^\top_1 \Mn \u_2)^2 \big] .
 \end{align} 
In the second identity, we used the fact that $r_i$ is independent of $\u_i$ and $\tr(\A \B) = \tr(\B \A)$. In the 3rd identity we used  that $r_i$ is independent of $r_j$ and in the 4th identity, we used that $\u_i$-s and $r_i$-s are i.i.d. 
 
Note that  $\E \big[  u(r^2_1/\sca) (r^2_1/\sca) \big] = \pdim$  due to \eqref{eq:sca} and 
 $\E \big[   u(r^2_1/\sca)^2 (r^4_1/\sca^2) \big] = \psi_1 \pdim (\pdim+2)$. 
 Using these facts and the following results from \cite{chen2010shrinkage}:
 \begin{align} 
\E \big[   ( \u^\top_1 \Mn \u_1)^2 \big]  &= \frac{ 2 \tr(\Mn^2) + \tr(\Mn)^2}{p(p+2)}  \label{eq:Expect1} \\
\E \big[   ( \u^\top_1 \Mn \u_2)^2 \big]  &=    \frac{\tr(\Mn^2)}{\pdim^2}  ,  \label{eq:Expect2}
\end{align} 
 we get  
 \begin{align*} 
&\E\big[\tr \! \big(\Mor^2\big) \big]   \\
&=    \frac{ \psi_1  }{\ndim}   \big( 2 \tr(\Mn^2) + \tr(\Mn)^2 \ \big)  + 
 \Big( 1- \frac{1}{\ndim} \Big)    \tr(\Mn^2)
 \\ 
&=\left( 1 + \frac{2 \psi_1-1}{\ndim} \right) \tr(\Mn^2) +  \frac{\psi_1}{\ndim}\tr(\Mn)^2  .
\end{align*} 
Next note that 
\begin{align*}
\tr(\bo C)^2 &= \left\{ \frac 1 n \sum_{i=1}^n  u \Big( \frac{r^2_i}{\sca}\Big)   \frac{r^2_i}{\sca} \u_i^\top \Mn \u_i \right\}^2 \\
&= \frac{ 1}{n^2} \sum_{i=1}^n \sum_{j=1}^n   u \Big( \frac{r^2_i}{\sca}\Big)   \frac{r^2_i}{\sca}  u \Big( \frac{r^2_j}{\sca}\Big)   \frac{r^2_j}{\sca} \u_i^\top \Mn \u_i \u_j^\top \Mn \u_j
\end{align*}
and hence
\begin{align*}
\E[\tr(\bo C)^2] &=\frac{1}{n^2} \sum_{i=1}^n \E\bigg[   u \Big( \frac{r^2_i}{\sca}\Big)^2  \frac{r^4_i}{\sca^2}  \bigg] \E \big[ ( \u_i^\top \Mn \u_i)^2 \big]  \\ 
&\ \ + \frac{ 1}{n^2} \sum_{i \neq j}   \E \bigg[ u \Big( \frac{r^2_i}{\sca}\Big)  \frac{r^2_i}{\sca} \bigg] \E \bigg[  u \Big( \frac{r^2_j}{\sca}\Big)   \frac{r^2_j}{\sca}  \bigg] \frac{\tr(\Mn)^2}{p^2} \\
&= \frac{\psi_1 p (p+2)}{n} \E \big[ ( \u_i^\top \Mn \u_i)^2 \big]  + \Big( 1- \frac 1 n  \Big) \tr(\Mn)^2.
\end{align*}
In the first identity  we used that 
\[
\E[\u_i^\top \Mn \u_i ] = \tr\{ \E[\u_i \u_i^\top] \Mn \} = \tr(\Mn)/p
\]
as $\E[\u_i \u_i^\top] = (1/p) \bo I$ and the fact that $r_i$ is independent of $\u_i$. 
In the second identity we used that samples are i.i.d. and  $\E \big[  u(r^2_i/\sca) (r^2_i/\sca) \big] = \pdim$  due to \eqref{eq:sca} and 
 $\E \big[   u(r^2_i/\sca)^2 (r^4_i/\sca^2) \big] = \psi_1 \pdim (\pdim+2)$. The result then follows by substituting \eqref{eq:Expect1} into the last equation.

\subsection{Proof of \autoref{lem:EtrC_complex}} \label{app:lem:EtrC_complex}

Using  $\Mn = \sca \M$ and  \eqref{eq:C_apu} 
one obtains as in \autoref{lem:EtrC} the following expression 
 \begin{align} \label{eq:EtraceC^2}
 &\E  \big[\tr \! \big(\Mor^2\big) \big]  \notag \\
  &=    \frac{ 1 }{\ndim}  \E \bigg[   u \Big( \frac{r^2_1}{\sca}\Big)^2  \, \frac{r^4_1}{\sca^2} \bigg]   \E \big[   ( \u^\hop_1 \Mn \u_1)^2 \big] \notag \\ 
 & \qquad  +  \Big( 1- \frac{1}{\ndim} \Big)   \bigg(\E \bigg[  u \Big( \frac{r^2_1}{\sca}\Big) \, \frac{r^2_1}{\sca} \bigg] \bigg)^2  \,   \E \big[   |\u^\hop_1 \Mn \u_2|^2 \big] .
 \end{align} 
Using the facts that  $\E \big[  u(r^2_1/\sca) (r^2/\sca) \big] = \pdim$  due to \eqref{eq:sca} and 
 $\E \big[   u(r^2/\sca)^2 (r^4/\sca^2) \big] = \psi_1 \pdim (\pdim+1)$  
 along with the facts that  \cite[cf. eq. (66), (67)] {chen2011robust}
\begin{align}
\E \big[   ( \u^\hop \M \u)^2 \big]  &=  [p(p+1)]^{-1} ( \tr(\M^2) + \tr(\M)^2) , \label{eq:lemma2_apu1} \\ 
\E \big[   ( \u^\hop_1 \M \u_2)^2 \big]  &=  \pdim^{-2} \tr(\M^2) ,\label{eq:lemma2_apu2}
\end{align} 
 we get  
 \begin{align*} 
&\E\big[\tr \! \big(\Mor^2\big) \big]   \\
&=    \frac{ \psi_1  }{\ndim}   \big( \tr(\Mn^2) + \tr(\Mn)^2 \ \big)  + 
 \Big( 1- \frac{1}{\ndim} \Big)    \tr(\Mn^2)
 \\ 
&=\left( 1 + \frac{\psi_1-1}{\ndim} \right) \tr(\Mn^2) +  \frac{\psi_1}{\ndim}\tr(\Mn)^2  .
\end{align*} 
Using similar proof as in proof of \autoref{lem:EtrC}, we obtain 
\begin{align} \label{eq:lemma2_apu3}
\E[\tr(\bo C)^2] = \frac{\psi_1 p (p+1)}{n} \E \big[ ( \u_i^\hop \Mn \u_i)^2 \big]  + \Big( 1- \frac 1 n  \Big) \tr(\Mn)^2.
\end{align}
The result then follows by substituting the expression from \eqref{eq:lemma2_apu2} into the last equation.


\subsection{Proof of \autoref{lem:kappa}} \label{app:lem:kappa}

We show the result in the real case only as the result follows similarly for complex-valued case. 
Write $ \bo z = \M^{-1/2} \x$ and note that $ \bo z \sim \mathcal E_p(\bo 0, \bo I,g)$. 
The result 
\beq \label{eq:psi1_SCM}
\kappa =   \frac{\E[\| \bo z \|^4]}{p(p+2)} -1  =  \frac{1}{3} \big( \E[z_1^4] - 3) 
\eeq
follows by recalling the stochastic decomposition. Namely,  $\bo z =_d r \u$, where $r =_d  \| \bo z \|$ is independent of $\u$,  and $\u$ possesses a uniform distribution on the unit sphere $\mathcal S^{p-1}$. Thus  we have that 
\[
\E[z_i^4] =  \E[ \| \bo z \|^4] \E[v_i^4] = 3 \psi_1 
\]
where we used that $ \E[ v_i^4] = 3 (p(p+2))^{-1}$ (see e.g. \cite[Lemma A.1.]{ollila2003affine} and that $\psi_1 = \E[ \| \bo z \|^4]/p(p+2)$.  Furthermore, since $\E[z_i^2] = 1$, \eqref{eq:psi1_SCM} states that  $ \kappa = (1/3) \mathsf{kurt}(z_i)$.  Then since $\mathsf{kurt}(z_i)=\mathsf{kurt}(x_i)$,  we have the stated result that  $ \kappa = (1/3) \mathsf{kurt}(x_i)$.


\ifCLASSOPTIONcaptionsoff
\newpage
\fi




\bibliographystyle{IEEEtran}

%










\clearpage 
\end{document}